\def\FormatStyle{Journal}  
\def\InHouse{InHouse}
\newif\ifblind
\theoremstyle{plain}
\newtheorem{thm}{Theorem}
\newtheorem{corl}{Corollary}
\newtheorem{defn}{Definition}
\newtheorem{lemma}{Lemma}
\newtheorem{prop}{Proposition}
\newcommand{\mynotesH}[2][]{\todo[backgroundcolor=blue!20!white,inline,
bordercolor=red]{#2}}
\definecolor{Pcolor}{rgb}{0.20,0.50,0.40}
\newcommand{\E}{\mathbb{E}}
\newcommand{\Prob}{\mathbb{P}}
\newcommand{\Var}{\text{Var}}
\newcommand{\bbR}{\mathbb{R}}
\newcommand{\I}{\mathbf{I}}
\newcommand{\sP}{\mathcal{P}}
\newcommand{\X}{\mathbf{X}}
\newcommand{\x}{\mathbf{x}}
\newcommand{\Rs}{R$^2$\xspace}
\newcommand{\RAIp}{RAI$^+$\xspace}
\newcommand{\mRs}{\text{\Rs}} 
\DeclareMathOperator*{\argmax}{arg\,max}
\begin{document}

\begin{frontmatter}

\title{Fitting High-Dimensional Interaction Models with Error Control}
\runtitle{High-Dimensional Interaction Models}

\begin{aug}
  \author{Kory D. Johnson\ead[label=e1]{kory.johnson@wu.ac.at}},
  \author{Robert A. Stine\ead[label=e2]{stine@wharton.upenn.edu}},
  \and
  \author{Dean P. Foster\ead[label=e3]{dean@foster.net}}%

  \runauthor{Johnson et al.}

  \affiliation{Wirtschaftsuniversit{\"a}t Wien and Amazon.com NYC}

  \address{Kory D. Johnson Welthandelsplatz 1 1020 Vienna, Austria\printead{e1}}
  \address{Robert A. Stine and Dean P. Foster 950 6th Avenue, New York, NY, 
USA, \printead{e2,e3}}
\end{aug}

\begin{abstract}
There is a renewed interest in polynomial regression in the form of identifying 
influential interactions between features. In many settings, this takes place in 
a high-dimensional model, making the number of interactions unwieldy or 
computationally infeasible. Furthermore, it is difficult to analyze such spaces 
directly as they are often highly correlated. Standard feature selection issues 
remain such as how to determine a final model which generalizes well. This paper 
solves these problems with a sequential algorithm called Revisiting 
Alpha-Investing (RAI). RAI is motivated by the principle of marginality and 
searches the feature-space of higher-order interactions by greedily building 
upon lower-order terms. RAI controls a notion of false rejections and comes with 
a performance guarantee relative to the best-subset model. This ensures that 
signal is identified while providing a valid stopping criterion to prevent 
over-selection. We apply RAI in a novel setting over a family of regressions in 
order to select gene-specific interaction models for differential expression 
profiling.
\end{abstract}

\begin{keyword}
\kwd{Feature Selection}
\kwd{Alpha-Investing}
\kwd{Stepwise Regression}
\kwd{Multiple Comparisons}
\kwd{Interaction}
\end{keyword}

\end{frontmatter}




\mynotesH{Bob: 
Bottom line: Impressive results. Repackage as “Fitting high dimensional 
interaction models” (or something like that), lead with the strong simulated 
examples (and hopefully a real one chosen from biology), then add theory as 
needed to justify what you are doing.}

\mynotesH{Go through entire document and make claims asymptotic. Check for all 
use of gaussian or normal etc}

\listoftodos

\section{Introduction}

We study the problem of selecting predictive features from a large feature
space. Our data consists of $n$ i.i.d. observations of (response, feature) 
sets, where each observation has $m$ associated features:
$(y_i,x_{i1},\ldots,x_{im}) \sim \sP$. As we are interested in models 
using not only these $m$ raw features 
(main effects) but also polynomials of the features (interaction 
effects), we consider an expanded feature space of dimension $d$ 
which is created by extending our data set as 
($y_i,x_{i1},\ldots,x_{im},x_{i(m+1)},\ldots,x_{id})$. Each $x_{ik}$ for $k \in 
\{m+1,\ldots,d\}$ is a product of raw features:
\begin{IEEEeqnarray*}{rCl}
 x_{ik} & = & \prod_{j=1}^m x_{ij}^{\xi_{kj}},
\end{IEEEeqnarray*}
where $\xi_{kj}$ is the power of feature $j$ in the interaction $k$. In 
general, we do not place a constraint on either the value of $\xi_{kj}$ or the 
number of features $j$ that are allowed in the interaction, as these 
characteristics will be adaptively determined by our procedure. The complete 
feature space, however, is best thought of as all interactions which can be 
created with the restriction that $\sum_{j=1}^m \xi_{kj} \leq \kappa$, for some
$\kappa \in \mathbb{N}$.

Observations are collected into matrices and the following
model is assumed for our data
\begin{IEEEeqnarray}{rCl+rCl}
 Y & = & \mu(\X)+ \epsilon & \epsilon & \sim & N_n(\mathbf{0},\sigma^2 \I_n)
 \label{eqn:model}
\end{IEEEeqnarray}
where $\X \in \bbR^{n \times m}$ and $Y \in \bbR^n$. We will produce 
linear estimates of the true conditional mean $\mu(\X)$ of the form $\hat Y = 
\X\hat\beta$ for $\hat\beta\in\bbR^d$. As we fit linear models which always 
include an intercept, we assume 
without loss of generality that our raw features are mean zero, $\sum_i y_i = 
\sum_i x_{ij} = 0$, $\forall j \leq m$, and normalized such that $\Var(Y) = 
\Var(X_j) = 1$, $\forall j \leq m$.  While we fit linear models and 
conduct hypothesis tests on estimated coefficients, 
we do not assume that we are testing within 
the true linear model or that one even exists. Instead, we view null hypotheses 
in terms of projections of the true mean on the observed data as in 
\citet{Abadie14}, also known as 
the ``$\X$-conditional parameter'' in \citet{Buja+18}. This is necessary as we 
will be performing hypothesis tests multiple times in many different models.
More details are given in Section \ref{sec:notation}.

Generating good predictions requires 
identifying a comparatively small subset of predictive features. 
In modern applications, $d$ is often exceedingly large, which makes 
the selection of an appropriate subset of these features essential.
The model selection problem is to minimize the error sum of squares
\begin{IEEEeqnarray}{rCl}
 \text{ESS}(\hat Y) = \|Y - \hat Y\|^2_2 = \sum_{i=1}^n (Y_i - \hat Y_i)^2 
\label{eqn:ESS}
\end{IEEEeqnarray}
while restricting the number of nonzero coefficients:
\begin{IEEEeqnarray}{c't'rCl}
 \min_{\delta \in \bbR^d} \text{ESS}(\X\delta) & s.t. &
\|\delta\|_{l_0} & = & \sum_{i=1}^d I_{\{\delta_{i}\neq0\}} \leq K,
\label{eqn:sparse-reg}
\end{IEEEeqnarray}
where the number of nonzero coefficients, $K$, is the desired sparsity. Note
that we are not assuming that a sparse true model exists, merely asking for a
sparse approximation. In the statistics literature, the model selection
problem (\ref{eqn:sparse-reg}) is more commonly posed as a penalized regression:
\begin{IEEEeqnarray}{rCl}
  \label{eqn:penreg}
\hat\beta_{0,\lambda} & = & \text{argmin}_{\delta}\left\{\text{ESS}(\X \delta) +
  \lambda\|\delta\|_{l_0} \right\}
\end{IEEEeqnarray}
where $\lambda \ge 0$ is a constant. The classical hard thresholding algorithms
$C_p$ \citep{Mal73}, AIC \citep{Aka74}, BIC \citep{Schwarz78}, and RIC
\citep{FosterG94} vary $\lambda$. The solution to (\ref{eqn:penreg})
is the least-squares estimator on an optimal subset of features. Let $M
\subset \{1,\ldots,d\}$ indicate the coordinates of a given model so that
$\X_M$ is the corresponding submatrix of the data. If $M_\lambda^*$ is the
optimal set of features for a given $\lambda$, then $\hat\beta_{0,\lambda}^* =
\X_{M_\lambda^*}^\dag Y$, where $\X_{M_\lambda^*}^\dag$ is the pseudoinverse of 
$\X_{M_\lambda^*}$.

Given the combinatorial nature of the constraint, solving (\ref{eqn:sparse-reg})
quickly becomes infeasible as $d$ increases and is NP-hard in general
\citep{Nata95}. Forward stepwise is the greedy approximation to the
solution of (\ref{eqn:sparse-reg}). Let $M_i$ be the features in the forward
stepwise model after step $i$ and note that the size of the model is $|M_i| =
i$. The algorithm is initialized with $M_0 = \emptyset$ and iteratively adds
the variable which yields the  largest reduction in ESS. Hence, 
$M_{i+1} = \{M_i\cup j\}$ where
\[j = \argmax_{l \in \{1,\ldots,d\}\backslash M_i}
\text{ESS}(\X_{M_i\cup l} \hat\beta_{M_i\cup l}^\text{LS}).\]
After the first feature is selected, subsequent models are built having fixed
that feature in the model. $M_1$ is the optimal size-1 model, but $M_i$ for
$i\geq 2$ is not guaranteed to be optimal, because $M_i$ is forced to include
the features identified at previous steps. 

To illustrate our method, consider the  concrete compressive strength data from 
the UCI machine learning repository \citep{concrete-data}. It is important to 
note that our goal is to identify \emph{interactions} in contexts in which 
interpreting such features is desirable. Precise examples of this sort are 
provided in Section \ref{sec:poly}. For now, our goal is to demonstrate the 
ability to search such a space to find signal over competing methods. This data 
set is used because the response, compressive strength, is described as a 
``highly nonlinear function of age and ingredients'' such as cement, fly ash, 
water, superplasticizer, etc. It is also useful since it has approximately 1000 
observations and only 8 features. A small number of features is needed so that a 
large, higher-order interaction space can be generated in order to see how 
standard feature selection algorithms perform. All interactions up to fourth 
order are provided to competitor algorithms, in which case there are 1,124 
features.

This paper presents an algorithm, Revisiting-Alpha Investing (RAI), that is able 
to adaptively search a high-dimensional interaction space by building 
interactions from previously selected components. This is done 
while controlling a notion of false rejections. While the details are presented 
in Section \ref{sec:rai}, the idea of RAI is easy to describe: RAI is provided 
the raw features, and conducts an approximate version of stepwise regression on 
these features. Instead of selecting the feature which is \emph{most} 
significant, it merely includes features which are significant enough to pass an 
appropriate statistical test. When a feature is included in the model, all 
possible interactions with \emph{previously selected features} are added to the 
searchable feature space. Stepwise selection then continues on this expanded 
feature space. Furthermore, it is often useful to search for 
interactions of transformed features, e.g. after taking the square-root. This 
allows a fourth-degree polynomial among the new features to be merely a 
second-degree polynomial among the original features. We compare RAI to forward 
stepwise, backward stepwise, Lasso, and FOBA \citep{Zhang08}. 
FOBA stands for ``forward-backward'' and is an algorithm with both sequentially 
adds and deletes features. In this data example, we also use our implementation 
of stepwise as other implementations were not able to cope with even problems 
of this size.

To estimate out-of-sample performance, we create 10 independent
splits of the data into training and test sets, where 25\% of the data is used 
for testing.  We select the parameters of the models using using 5-fold 
cross-validation using the training data and measure performance of the 
CV-selected model on the test set. We compare models using the predictive 
root-mean-squared error (RMSE) on the test set and average model size. Each row 
in Table \ref{tab:concrete} indicates the explanatory features that the 
algorithms are provided. For example, the first row shows the performance 
results when all algorithms are only given raw features to estimate main 
effects, while in subsequent rows the competitor algorithms are given all 
second order interactions etc. We only consider models up to fourth-order 
interactions due to the computational limitations of competitor algorithms. 
This also motivates the desire for a stepwise solution, as other standard 
algorithms are unable to select satisfactory models from interaction spaces.

\mynotesH{plot of MSE on size: get same MSE need more features}
\begin{table}[ht]
\centering
 \caption{Concrete Compression Strength Results.}
 \label{tab:concrete}
\begin{tabular}{rlrrrrr}
  \toprule
 Set & Statistic & foba & lasso & lm & rai & raiStep \\ 
  \midrule
  $\X$ & RMSE& 10.05 & 10.05 & 10.08 & \bf{7.70} & 10.09 \\ 
  $p=11$ & Model Size& 7.30 & 9.40 & 11.00 & \bf{18.60} & 5.90 \\[1ex] 
  $\X^2$ & RMSE& 7.66 & 7.57 & 7.47 &  & 8.00 \\ 
  $p=74$ & Model Size& 40.80 & 66.10 & 74.00 &  & 12.30 \\[1ex] 
  $\X^3$ & RMSE& \bf{6.54} & 8.09e9 & 16.81 &  & 7.59 \\ 
  $p=326$ & Model Size& \bf{59.80} & 288.20 & 326.00 &  & 16.20 \\[1ex] 
  $\X^4$ & RMSE& 19.19 & 7.55e20 & 1.73e6 &  & \bf{7.50} \\ 
  $p=1124$ & Model Size& 86.30 & 562.30 & 1124.00 &  & \bf{20.30} \\ 
   \bottomrule
\end{tabular}
\end{table}

There are several important points in Table \ref{tab:concrete}. As expected, RAI
is superior to other feature selection methods when only considering
marginal features, as the other algorithms do not extend the feature space to 
consider interactions; however, we can adjust for the information differences by
giving the competitor algorithms a richer feature space. When given second 
order interactions, the other algorithms have effectively the same 
performance as RAI but select far more features. Furthermore, as the 
feature space gets more complex, the inherent difficulty becomes apparent 
as the RMSE of the lasso and the full linear model explode. FOBA actually 
continues to improve for third order interactions before worsening, though it 
still selects many more features. Unfortunately, the computational 
complexity of FOBA is even worse than classical stepwise, so we are 
unable to use it in larger examples considered later. Lastly, it is worth 
noting that our implementation of forward stepwise has a stopping criterion 
that prevents it from over-selecting: the performance does not change 
drastically as we go beyond second order interactions. Furthermore, the final 
model selected using RAI on raw features behaves very similarly to the stepwise 
model selected from all forth order interactions. As such, RAI is adequately 
searching this space in a stepwise fashion. 

The remainder of the paper is organized as follows. Section \ref{sec:notation} 
provides the notation and definitions necessary to state our main results. 
Section \ref{sec:approx-step} introduces our threshold approximations via the 
conceptually simpler Revisiting Holm procedure. This section begins with a 
motivating example to demonstrate the differences between post-selection 
inference and using inference to perform model selection. Section \ref{sec:rai} 
provides a more complex threshold procedure which overcomes the shortcoming of 
Revisiting Holm and has performance guarantees. The benefits and validity of 
these procedures are demonstrated via simulation in Section \ref{sec:compare}. 
The interaction-search procedure described above is explained in Section 
\ref{sec:poly}, which includes a novel methodology differential 
expression profiling. Section \ref{sec:discussion} concludes.


\section{Notation and Results}
\label{sec:notation}

We use notation from the multiple comparisons literature given its connection 
to 
our solution. Consider $l$ null hypotheses, $H_{[l]}$: $H_1,\ldots,H_l$, and 
their associated p-values, $p_{[l]}$: $p_1,\ldots,p_l$. The notation $[l] = 
\{1,\ldots,l\}$ will be used frequently for any $l \in \mathbb{N}$. The 
hypotheses are \emph{sequential}, meaning they arrive in a sequence and one 
must 
decide whether or not to reject test $H_i$ before knowing any information about 
$H_{i+1}$. In general, the order of hypotheses can depend on the sequence of 
rejections, but this is not the focus of the current discussion so is ignored. 
As sorting p-values is frequently required, we use standard order statistic 
notation: $p_{(1)} \leq p_{(2)} \leq \ldots \leq p_{(l)}$. Note that this same 
ordering extends to both hypotheses and features: $H_{(k)}$ and $X_{(k)}$ are 
the hypothesis and feature corresponding to $p_{(k)}$, respectively.

As we analyze forward stepwise, each null hypothesis assumes that the fit of 
$M$ is not improved by adding a feature $\X_j$:
\begin{IEEEeqnarray*}{rCl}
 H_i^{M,j}: P_M\mu(\X) & = & P_{M \cup j}\mu(\X),\IEEEyesnumber\label{eqn:null}
\end{IEEEeqnarray*} 
where $M$ is a set of indices specifying the columns of $\X$ in the current 
model, $P_{M'} = \X_{M'}\X_{M'}^\dag$ is the projection matrix onto the 
column-span of $\X_{M'}$, and $\X_{M'}^\dag$ is the pseudo-inverse of $\X_{M'}$ 
for any $M' \subset [m]$.  Similarly, $P_{M'^\perp} = \I-P_{M'}$  is the 
projection onto the space orthogonal to the column-space of $\X_{M'}$. In 
equation (\ref{eqn:null}), the null hypothesis depends on the data $\X$. 

The  sub- and super-scripts in equation (\ref{eqn:null}) are unrelated. The 
subscript 
$i$ merely counts the index of the test whereas the superscripts $M$ and $j$ 
determine the test. Sometimes scripts will be dropped when only the index of the 
test, $i$, or the identity of the test, $\{M,j\}$, is required (these 
conventions also apply to $R$ and $V$ defined below).  Note that order statistic 
notation extends here as well. Given a set of hypotheses 
$H^{M,1},\ldots,H^{M,l}$, their corresponding sorted p-values are $p^{M,(1)} 
\leq p^{M,(2)} \leq \ldots \leq p^{M,(l)}$. The comparisons here are legitimate 
because all $l$ tests are conducted within the same base model $M$.

Note that our null hypothesis is equivalent to testing the coefficient on $\X_j$ 
in the model $M' = M \cup j$. Additional care must be taken with these tests 
as we are not assuming to be testing in the correct model.
\citet{Abadie14} provide a standard error estimate that is robust to model 
mis-specification for our $\X$-conditional parameter. In practice, we 
use a more conservative estimator for computational efficiency as explained in 
Section \ref{sec:rai}. Given such an estimate, $\hat \sigma$, the classical 
test statistic in this setting is 
\[z^{[M,j]} = \frac{Y'P_{M^\perp}\X_j}{\hat\sigma\sqrt{\X_jP_{M^\perp}\X_j}}.\]

As we are using classical p-values, one may wonder at the connections to 
classical rules to stop forward stepwise such as F-to-enter or F-to-delete. 
It was previously demonstrated that such tests do not control
any robust statistical quantity, because attempting to test the addition of new
features uses non-standard and complex distributions \citep{DraperGK71, 
PopeW72}. This critiques are not relevant to our setting, because we are using 
this aforementioned, model-dependent target and modifying the 
stepwise routine to account for selection.

Define the statistic $R_i^{M,j}=1$ if $H_i^{M,j}$ is rejected and $R_i^{M,j}=0$ 
if not. Similarly, let $V_{i,\mu}^{M,j} = 1$ if $R_i^{M,j}=1$ is a 
false rejection ($H_i^{M,j}$ is true) and $V_{i,\mu}^{M,j} = 0$ if 
not. The dependence of $V_{i,\mu}^{M,j}$ on $\mu$ indicates that it is an 
unobserved random variable which depends on the unknown mean. For 
simplicity, all future uses of $V_{i,\mu}^{M,j}$ suppresses this  notation. 
Define
\begin{IEEEeqnarray*}{rCl't'rCl}
 R(l) & = & \sum_{i=1}^l R_i & \text{ and} &
 V(l) & = & \sum_{i=1}^l V_i,
\end{IEEEeqnarray*}
to be the total number of rejections and false rejections in the $l$ 
sequential tests, respectively.


Our method controls the marginal false discovery rate (mFDR) 
which is similar to the more common false discovery rate (FDR):
\begin{defn}[Measures of the Proportion of False Discoveries]
 \label{defn:false_discoveries}
\begin{IEEEeqnarray*}{rCl}
  \text{mFDR}(l) & = & \frac{\E(V(l))}{\E(R(l))+1}\\
  \text{FDR}(l) & = &  \E\left(\frac{V(l)}{R(l)}\right)\text{, where
}\frac{0}{0} = 0.
\end{IEEEeqnarray*}
\end{defn}
In some respects, FDR is preferable to mFDR because it controls a property of a
\emph{realized} distribution. While not observed, the ratio $V(l)/R(l)$ is the
realized proportion of false rejections. FDR controls the expectation of this 
quantity. In contrast, mFDR is
not a property of the distribution of $V(l)/R(l)$. That being said, FDR and mFDR
behave nearly identically in practice, and mFDR yields a powerful and flexible 
martingale \citep{FosterS08}. This martingale provides the basis for proofs of 
mFDR control in a variety of situations.

Our first contribution is an elucidation of the effects that must be considered
when using hypothesis testing for model selection. Standard inference tools are
invalid due to two selection effects: the \emph{ranking} effect and the
\emph{testing} effect. The ranking effect is the result of testing hypotheses
that are suggested by the data and the testing effect is the result of
only conducting future tests if previous tests have been rejected. The impacts
of these effects are explained via example in Section \ref{sec:inf4ms}. 

In Sections \ref{sec:approx-step} and \ref{sec:rai}, we demonstrate that the 
sequential testing approach to multiple comparisons yields an approximate 
forward stepwise algorithm that controls for the selection effects.  We provide 
three procedures of increasing complexity: Revisiting-Holm (RH), 
Revisiting-Alpha-Investing (RAI), and \RAIp. RH is introduced in Section 
\ref{sec:approx-step} purely as a conceptual tool to aid understanding. RAI is 
provided in Section \ref{sec:rai} to avoid serious pitfalls of RH, and \RAIp 
provides minor adjustments to the procedure to enjoy stronger theoretical 
guarantees. In practice, RAI and \RAIp behave essentially identically.

All of the procedures are threshold approximations to stepwise regression.
At each step, forward stepwise computes and sorts the sequential p-values for 
adding each of the remaining $m'$ features to the current model $M$, $p^{M,(1)} 
\leq \ldots \leq p^{M,(m')}$,  and selects the  feature with the minimum 
p-value. Instead of performing a full sort, threshold approximations use a set 
of increasing rejection thresholds, and hypotheses are rejected when their 
p-value falls below a threshold. A feature merely needs to be significant 
\emph{enough}, not necessarily the \emph{most} significant. The initial 
rejection threshold conducts a strict test for which only highly significant 
features are added to the model. Subsequent thresholds perform less stringent 
tests. As such, the final model is built from a series of approximately greedy 
choices.

While stepwise-regression is a traditionally slow procedure, RAI and \RAIp 
achieve a dramatic increase in speed by conducting individual, sequential 
tests and taking advantage of Variance Inflation 
Factor Regression \citep{LinFU11}. If the final model size is of smaller order 
than $\min(n,m)$, the computational complexity of RAI grows at $O(nm\log(n))$. 
Using the full data requires computing $\X'y$, which takes $O(nm)$ time. 
Therefore, RAI merely adds a log factor to perform valid model selection.

Our first result shows that sequential, classical tests are conditionally of 
level $\alpha$ when the null hypotheses are augmented for previously failed 
tests. For example, if we fail to reject $H_i^{M,j}: P_M\mu(\X) = P_{M \cup 
j}\mu(\X)$, then the next test is under the null-hypothesis, $H_{i+1}^{M,j'}: 
P_M\mu(\X) = P_{M \cup j}\mu(\X) = P_{M \cup j'}\mu(\X)$. The alternative in 
this case is still the same, ie, that $P_M\mu(\X) \ne P_{M \cup j'}\mu(\X)$. 
This does not yield the most powerful test of this null hypothesis, but it is 
consistent with our use of a rejection of it. While such augmentation may not 
generally be palatable, it is natural in the sequential 
setting, only concerns the set of failed hypothesis tests, and allows us to 
leverage the Gaussian Correlation Inequality \citep{Royen14,LatM15} in the 
proof of the following theorem. 
\begin{thm}
 \label{thm:valid-tests}
 Using an estimate of $\hat\sigma$ which is robust to model 
misspecification, the sequential testing methods RAI and \RAIp ensure
\[\E(V_i|R_1,\ldots,R_{i-1}) \leq \alpha_i\] 
for every $i < n$ under the augmented null-hypothesis.
\end{thm}

Note that this holds regardless of the correlation structure between the columns 
of $\X$ due to normality. The constraint on the maximum number of tests $i$ is 
due to augmenting the null hypothesis for all failed tests. More information on 
this construction and its implications are in Appendix \ref{app:valid-tests}. 
With Theorem \ref{thm:valid-tests} in hand, all of our procedures control mFDR 
as they are alpha-investing rules \citep{FosterS08}. The algorithms are 
presented independently of alpha-investing so that the algorithm and proof 
method are not conflated.
\begin{corl}
 \label{cor:mfdr-control}
 RAI and RAI$^+$ control mFDR:
  \begin{IEEEeqnarray*}{rCl}
  \frac{\E(V(m))}{\E(R(m))+1} \leq \alpha.
 \end{IEEEeqnarray*}
\end{corl}

Theorem \ref{thm:valid-tests} and Corollary \ref{cor:mfdr-control} are in some 
ways similar to other post-selection inference methods but the perspective is 
very different. \citet{Berk+13} adjust for the potentially adversarial 
instance in which a final model is selected to influence the test-statistic of 
one of the features. Selective inference \citep{FithST15,Taylor+14} control the 
selective type-I error, which conditions on the event that the model $M$ and 
null hypothesis $H_0$ were selected for testing. 

We do not control this selective error rate, because the hypothesis we are 
testing is not actually \emph{intentionally selected}, so to speak, based on 
some criteria. We do not test a feature in a given model because it has the 
minimum p-value, for example. Instead, we produce an algorithm which 
agnostically generates a set of rejections with the desired properties while 
paying sufficient alpha-wealth for their discovery. In this 
way, we need not condition on the question being asked as in selective 
inference, but on the set of answers, ie rejections, that led us to the current 
test. This subtle distinction allows us to leverage classical hypothesis 
tests. In an important sense, we are not doing the ``data snooping'' that 
invalidates these tests.

A second type of result concerns the approximation guarantee between \RAIp and 
stepwise regression. While we are not guaranteed to select the same model as 
forward stepwise, we are able to guarantee that the in-sample fit we achieve is 
comparable to that of forward stepwise and the ideal model.  Our measure of 
model fit for a set of features $\X_M$ is the coefficient of determination, 
\Rs, 
defined as \[\mRs(M) = 1 - \frac{ESS(\X_M \hat\beta_M)}{ESS(\bar Y)} \] where 
$\bar Y$ is the constant vector of the mean response and $\hat\beta_M$ is the 
least squares estimate for predicting $Y$ from $\X_M$. Corollary 
\ref{cor:mfdr-control} coupled 
with Theorem \ref{thm:performance} below demonstrate that \RAIp identifies 
signal (measured in-sample), but controls for over-fitting by not making too 
many incorrect selections.

In the following theorem, for $r \in (0,1)$, $r^s$ is the threshold for 
improvement in \Rs used on the $s$th testing pass. For example, if $r=1/2$, 
then 
the first testing pass searches for features which increase \Rs by 1/2, while 
the second searches for those yielding an increase of 1/4. This yields a 
geometrically decreasing sequence of bounds. The translation between these \Rs 
thresholds to p-value thresholds is given in the proof of the theorem. The 
theorem compares the performance of the selected set of $l$ features, $M_l$, 
and 
the best performing set of $k$ features, $M_k^*$. The term $\gamma$ is similar 
in spirit to the restricted-eigenvalue condition \citep{RasWY10}, but is less 
restrictive and tailored to our setting. It is also lower bounded by a sparse 
eigenvalue. A full discussion is provided in Section \ref{sec:rai}. As will be 
seen in the proof, most instances of the procedure use a bound that is much 
better than $\gamma$, as control with respect to models of this size is only 
required in cases when all features in the final model are rejected in 
a single pass. This only occurs in specially created scenarios.

\begin{thm}
 \label{thm:performance}
 RAI$^+$ selects a set of features $M_l$ of size $l$ such that
 \begin{IEEEeqnarray*}{rCl}
  \mRs(M_l) & \geq & (1-e^{l/c})\mRs(M_k^*)
 \end{IEEEeqnarray*}
 where $c = \left( \frac{\iota+k}{\gamma r} - \iota \right)$ and $\iota$ is the 
maximum number of features rejected in a testing pass. \end{thm} As $r\to1$, 
$\iota\to 0$, and \RAIp converges to stepwise regression. In this case, Theorem 
\ref{thm:performance} recovers the usual bound for stepwise regression 
$\mRs(S_l) \geq (1-e^{l\gamma/k})\mRs(S_k^*)$, which is also proven in the 
appendix.

If the approximation guarantee given above is insufficient and using the exact 
forward stepwise path is desired, \citet{GSell+15} provide a method for turning 
a sequence of valid, independent, post-selection p-values into a model 
selection procedure. While these p-values cannot be validly used to 
select a model \citep{BrownJ16comm}, they derive a procedure in this case 
called  ForwardStop, which can be seen as a limiting case of RAI. 
This produces a final procedure, Stepwise-RAI (S-RAI), for use in small models 
and also demonstrates why ForwardStop has low power. Furthermore, S-RAI is able 
to perform valid model selection using traditional, stepwise p-values.

The sequential testing framework of RAI allows the order of tested hypotheses to
be changed as the result of previous tests. This allows for directed searches
in data base queries or identifying interactions. Section
\ref{sec:poly} leverages this flexibility to greedily search
high-order interactions spaces. Such directed search does not invalidate 
Theorem \ref{thm:valid-tests} or Corollary \ref{cor:mfdr-control} as future 
tests or orthogonal to previous rejections; however, the performance guarantee 
needs to be modified and is presented in in Section 
\ref{sec:poly}. We provide simulations and real data examples to 
demonstrate the success of our method.

\section{Approximating Stepwise Regression}
\label{sec:approx-step}

To motivate the construction of our revisiting procedure, it is instructive to 
consider the problem of stopping stepwise regression in a simple data setting. 
Consider the prostate cancer data used to motivate the 
inference methods of \citet{Taylor+14}. The data set has 67 observations of 
8 
explanatory variables which will be used to predict the log PSA level of men 
who 
had surgery for prostate cancer. The traditional use of stepwise regression is 
summarized in Table \ref{tab:prostate}. Each step of the procedure adds a 
feature to the model and assigns a p-value measuring the reduction in ESS using 
an F-test with independent $\hat\sigma$. The classical p-value does not take 
into account the fact that the hypothesis being tested is suggested by the data 
and algorithm. The second column of p-values in Table \ref{tab:prostate} are 
from \citet{Taylor+14} and adjust for selecting features using forward stepwise.

\begin{table}
 \centering
 \caption{Stepwise Regression: Prostate Cancer Data}
 \label{tab:prostate}
 \begin{tabular}{cccc}
  Step & Parameter & Stepwise p-value & Adjusted p-value\\
  \hline 1 & lcavol & 0.0000 & 0.000\\
  2 & lweight & 0.0003 & 0.006\\
  3 & svi & 0.0424 & 0.425\\
  4 & lbph & 0.0468 & 0.168\\
  5 & ppg45 &  0.2304 & 0.423\\
  6 & lcp & 0.0878 & 0.273\\
  7 & age & 0.1459 & 0.059\\
  8 & gleason & 0.8839& 0.156\\
 \end{tabular}
\end{table}

Our goal is to use the stepwise p-values in Table \ref{tab:prostate} to 
determine when to stop forward stepwise. For example, if it is claimed that the 
first 4 steps are significant but the 5th is not, the selected model will 
include lcavol, lweight, svi, and lbph. Such claims should be made solely on 
the basis of the p-values. We present a framework in which the traditional, 
stepwise p-values can be used to select a model. The p-values can provide 
preferable rejection regions to the adjusted tests of \citet{Taylor+14}
\citep{BrownJ16comm}. We return to the adjusted p-values in Section 
\ref{sec:exactFS} when comparing their use to one of our procedures.  The 
difficulties in post-selection inference are accounted for within the testing 
framework instead of the p-value computation. This requires modifying the 
forward stepwise procedure but Theorem \ref{thm:performance} guarantees we 
select a model which performs similarly.

\subsection{Inference for Model Selection}
\label{sec:inf4ms}

Attempting to use inference \emph{for} model selection poses significantly 
different challenges than merely performing inference \emph{after} a model is 
selected. Inference will be conducted multiple times based on the result of 
previous inferential claims. A simple, orthogonal example distinguishes the 
different challenges posed by conducting inference multiple times. 
This separates questions about the statistical validity of 
repeated testing during forward stepwise from its ability to approximate the 
sparse regression problem 
(\ref{eqn:sparse-reg}): 
the model identified at the $k$th step of forward stepwise exactly solves 
(\ref{eqn:sparse-reg}) under orthogonality. That being said, constructing valid 
tests is still non-trivial due to the selection inherent in forward stepwise 
and repeated testing.

Suppose the data contain 10 orthogonal, explanatory features with true 
parameters $\beta_1=\ldots=\beta_{10}=0$ and $\sigma^2$ is known. In this case, 
the test statistics are iid $N(0,1)$ variables and are written with $z$. 
Furthermore, in the orthogonal setting test statistics and p-values do not 
change depending on the model in which they are estimated. Therefore, all 
statistics are assumed to be computed in simple regressions: $M = \emptyset$. 
The z-statistics  for $H^{\emptyset,1},\ldots,H^{\emptyset,10}$ are 
$z^{\emptyset,1},\ldots,z^{\emptyset,10}$ with corresponding p-values 
$p^{\emptyset,1},\ldots,p^{\emptyset,10}$. The feature selection problem is 
equivalent to determining an order for testing $H^{\emptyset,[10]}$ while 
controlling a measure of false rejections at level $\alpha$. Since our goal is 
model selection, a feature is ``included'' or ``added'' to the model when the 
corresponding null hypothesis is rejected. Sort the hypotheses by their 
p-values $p^{\emptyset,(1)}<\ldots<p^{\emptyset,(10)}$. At step $i$, forward 
stepwise tests $H_i = H^{\emptyset,(i)}$ using test statistic $z_i$.

As expected, the distributions of the absolute order statistics are 
significantly different than the naive $|N(0,1)|$. Figure \ref{fig:effects-1} 
and \ref{fig:effects-2} show the distributions of $|z_1|$ and $|z_3|$, the 
magnitude of statistics chosen in steps 1 and 3. Informally, the difference 
between these distributions and the distribution of $|N(0,1)|$ is the ranking 
effect. This name is motivated as the difference between the test of a rank 
statistic and a randomly chosen one. Since our goal is not to estimate the 
correct distribution but to perform a valid, two-sided test, we desire a 
critical value yielding a level-$\alpha$ test. The nominal $\alpha=.1$ critical 
value is 1.645, whereas the simulated threshold is 2.58. This value can be 
easily computed using the Bonferroni correction, and the asymptotic, expected 
size of further rank statistics can be computed in the orthogonal case 
\citep{GeorgeF00}.

\begin{figure}
\centering
  \begin{subfigure}{.3\textwidth}
   \centerline{\includegraphics[width=1\textwidth]{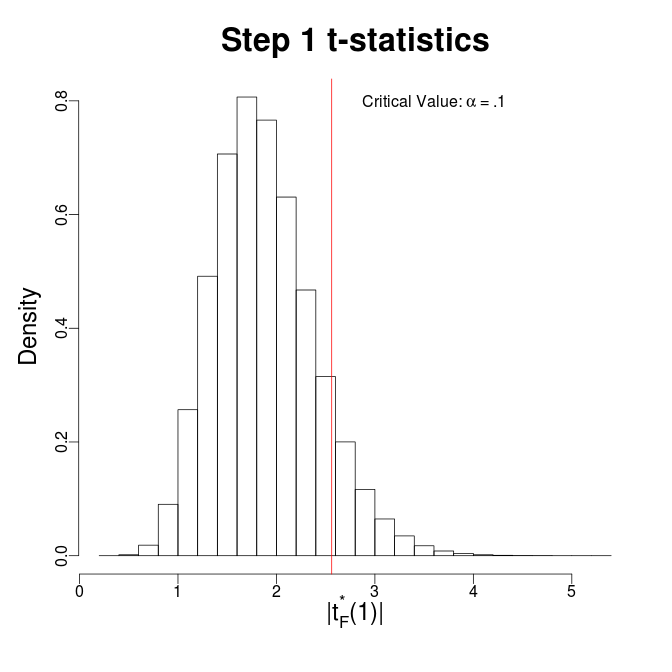}}
   \caption{}
   \label{fig:effects-1}
  \end{subfigure}
  \begin{subfigure}{.3\textwidth}
   \centerline{\includegraphics[width=1\textwidth]{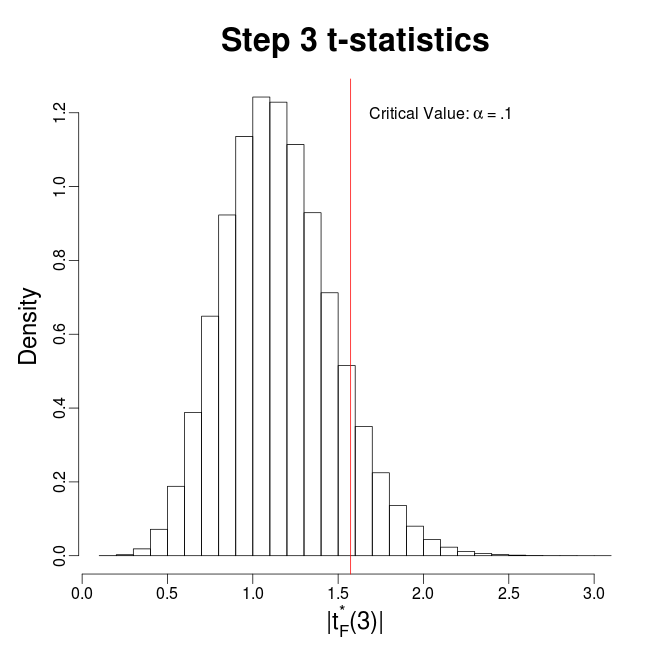}}
   \caption{}
   \label{fig:effects-2}
  \end{subfigure}
  \begin{subfigure}{.3\textwidth}
  \centerline{\includegraphics[width=1\textwidth]{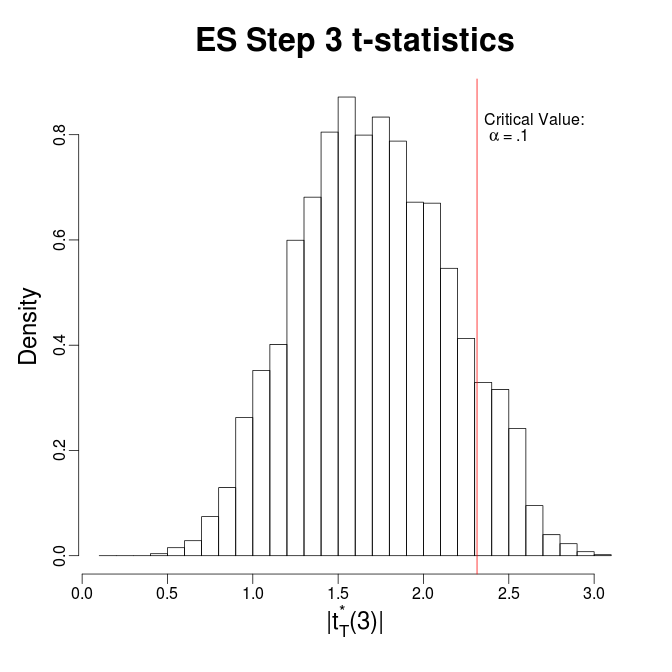}}
  \caption{}
   \label{fig:effects-3}
  \end{subfigure}
  \caption{Illustration of selection and sequential effects under the global
null hypothesis.}
  \label{fig:effects}
\end{figure}

Consider a procedure in which forward stepwise is terminated on the first step 
in which a hypothesis fails to be rejected. In this case, $H_i$ is only tested 
if $H_{[i-1]}$ are all rejected. This procedure is discussed in 
\citet{BrownJ16comm}. The simulated, .1-critical value for $|z_3|$ is 
approximately 1.57, which is lower than the naive level-.1 significance 
threshold. On one hand, this is intuitive as $|z_3|$ is constrained to 
be less than $|z_1|$ and $|z_2|$ by definition. That being said, we are only 
testing $|z_3|$ on the subset of cases in which both $H_1$ and $H_2$ are
rejected. On this subset of cases, both $|z_1|$ and $|z_2|$ are 
large, thus the constraint $|z_3| < |z_2| < |z_1|$ does not place as strong of 
a restriction on $|z_3|$.

The relevant distribution of $|z_3|$ is only realized on the subset of cases in 
which $H_3$ is actually tested. Figure \ref{fig:effects-3} shows the 
distribution of $|z_3|$ on the subset of cases in which $H_1$ and $H_2$ were 
rejected using $\alpha=.1$ using thresholds from Holm's step-down procedure 
\citep{Holm79}: given $m$ hypotheses, the p-value threshold for $p_{(i)}$ is 
$\frac{\alpha}{m-i+1}$, for $i \in \{1,\ldots,m\}$. Informally, the difference 
between the distributions in Figures \ref{fig:effects-2} and 
\ref{fig:effects-3} 
is the testing effect. The testing effect increases the simulated critical 
value 
from 1.57 to 2.32.  The remainder of this section develops a procedure which 
generates correct critical values for this setting. It is instructive to 
continue in the orthogonal setting before addressing the general case.


\subsection{Orthogonal Case}
\label{sec:orthog}


The pseudocode for a threshold approximation to stepwise using the Holm levels, 
called Revisiting Holm (RH),  is given in Algorithm  \ref{alg:RH}. As the index 
of the test is irrelevant here, the  subscript is removed. In words, RH 
``passes'' through the features multiple times, testing all features at levels 
determined by the Holm procedure. The testing pass is indexed by $s$ and will 
also be referred to as a ``round'' of testing. For now, assume that only one 
rejection is made per round, ie, $p_{(s)} \leq \frac{\alpha}{m-s+1}$ and 
$p_{(s+1)} > \frac{\alpha}{m-s+1}$. The procedure terminates when either no 
rejections are made in a single testing pass or all hypotheses have been 
rejected. Note that the selected model includes the features corresponding to 
the rejected hypotheses. 

\begin{algorithm}
\caption{Revisiting Holm (RH)}
\label{alg:RH}
\begin{algorithmic}
  \STATE {\bfseries Input:} Feature matrix $\X$, response $Y$
  \STATE {\bfseries Output:} Model corresponding to a set of features $M 
\subset 
[m]$
  \STATE {\bfseries Set:} $M = \emptyset$, $s = 1$.
  \WHILE {$|M| \leq m$}
   \FOR[Loop is a testing ``round'' or ``pass'']{$j$ in $[m] \backslash M$} 
    \IF {$H^{M,j}$ rejected at level-$\frac{\alpha}{m-s+1}$}
      \STATE $M = M \cup j$ 
    \ENDIF
   \ENDFOR
   \IF {No rejections in testing round}
    \STATE {\bfseries Return:} $M$  \COMMENT{Early termination}
   \ENDIF
   \STATE $s$ = $s$ + 1 \COMMENT{Next testing pass}
  \ENDWHILE
  \STATE {\bfseries Return:} $M = [m]$ \COMMENT{All hypotheses rejected} 
\end{algorithmic}
\end{algorithm}

With orthogonal data and independent $\hat \sigma$, the model $M$ in which  
features are tested is irrelevant as neither hypotheses nor test statistics 
change as $M$ does. It is included in the notation for generality in later 
sections. For clarity, this subsection will therefore index hypotheses as 
$H^{\emptyset,j}$ to highlight the fact that they do not change. 
\citet{FosterS08} note that this procedure produces thresholds  similar to the 
Benjamini-Hochberg (BH) procedure \citep{BenH95}. The current work provides the 
modifications 
necessary to use this concept as a valid model selection procedure in 
non-orthogonal settings and provides general performance guarantees.

While Algorithm \ref{alg:RH} may appear effectively identical to the original 
Holm procedure, there is an important distinction: RH formally tests hypotheses 
multiple times. In the classical use of the Holm procedure in which hypotheses 
are not tested mutliple times, a level $\frac{\alpha}{m-s+1}$ test for 
$H^{\emptyset,j}$ simply rejects if $p_j < \frac{\alpha}{m-s+1}$. In our case, 
hypothesis tests in later rounds must account for the failed test in previous 
rounds.

In our simulation example with $m=10$ and $\alpha=.1$, the second pass performs 
a level-$.01/9$ test conditional on the p-value being greater than the first 
pass threshold of $.1/10$. Under the null hypotheses, the sequential p-value is 
still uniformly distributed; hence the rejection threshold for the second pass 
is $p_2^* = .021$. We summarize this simple calculation for general use with a 
definition:
\begin{defn}[Conditional Rejection Threshold]
 \label{defn:nibble}
The rejection threshold, $p^*$ for a level $\alpha$-test of $H_i$ given that it 
failed to be rejected in a previous test with threshold $p_1$ is 
\begin{IEEEeqnarray}{rCl}
 p^* & = & p_1 + \alpha - p_1\alpha. \label{eqn:nibble}
\end{IEEEeqnarray}
Stated differently using $p^*$ as above,
\begin{IEEEeqnarray*}{rCl}
 \alpha & = & \Prob_{H_i}(p \leq p^*|p > \alpha_1)
\end{IEEEeqnarray*}
\end{defn}

For clarity, Figure \ref{fig:rev-holm} shows the first few testing passes of 
RH. 
Our hypothetical data follows the simple simulation example with 10 orthogonal 
explanatory variables and $\alpha=.1$. The rejection thresholds during the 
first 
four passes are the horizontal, dashed lines. The first round of the procedure 
tests all p-values at level .1/10. As one p-value falls below this threshold, 
the corresponding hypothesis is rejected and the procedure continues. Round 2 
tests the remaining hypotheses at level-.1/9 which leads to a rejection 
threshold of .021. One p-value is below this threshold, so its hypothesis is 
rejected and the procedure continues. Round 3 tests the remaining hypotheses at 
level-.1/8 and rejection threshold .033, which leads to a third rejection. 
Round 
4, however, fails to make any rejections using a rejection threshold of .047. 
Therefore the algorithm terminates, resulting in the model selected during the 
first 3 rounds: features 2, 4, and 6.
\begin{figure}
 \centerline{\includegraphics[width=.4\textwidth]{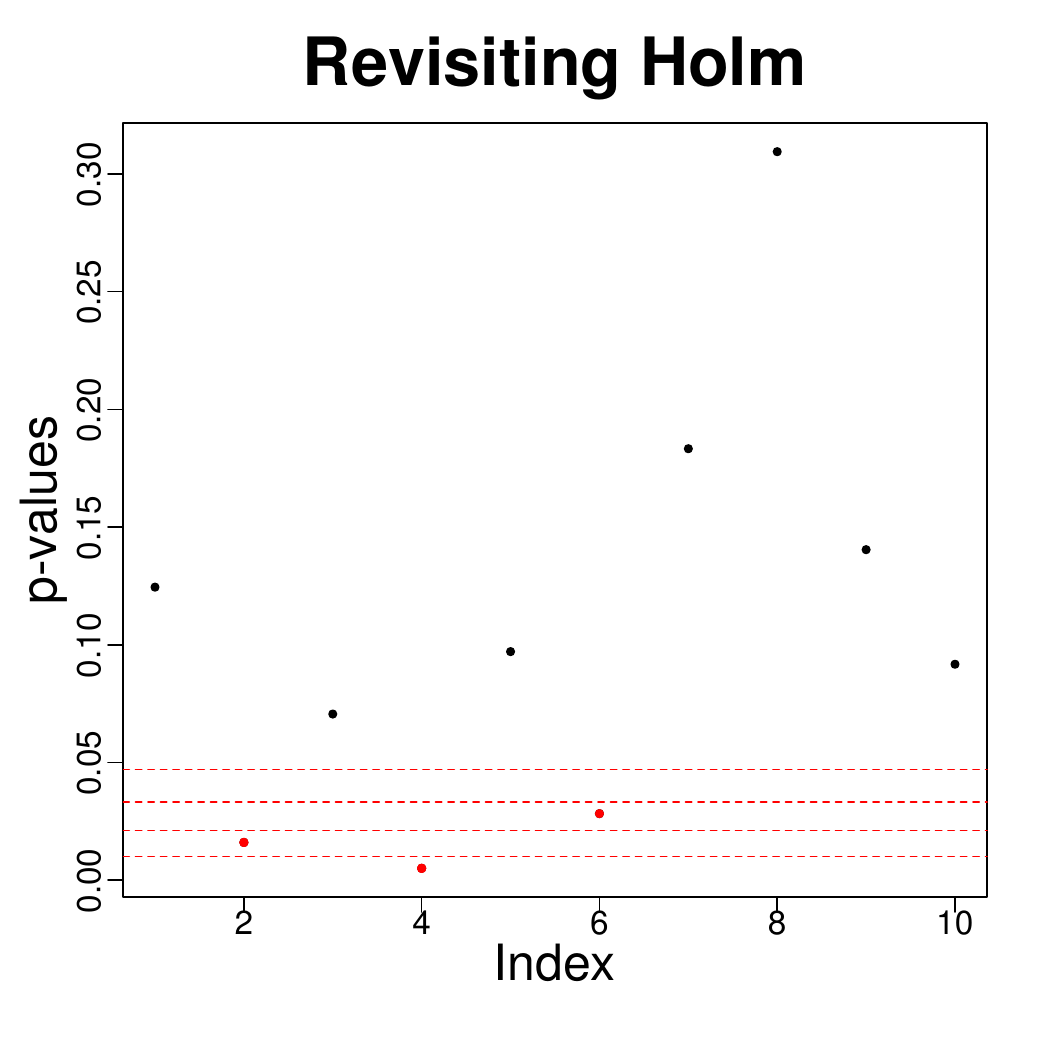}}
 \caption{Example of Revisiting-Holm procedure.}
 \label{fig:rev-holm}
\end{figure}

If only one hypothesis is rejected per round, then RH exactly replicates the 
forward stepwise selection path. To relax this assumption, suppose that 
$p_{(1)}<\alpha/m$ and $p_{(2)}<\alpha/m$, such that both hypotheses would be 
rejected on the first pass. Since $H^{\emptyset,(1)}$ was rejected, 
$H^{\emptyset,(2)}$ could have been tested at level-$\alpha/(m-1)$. Such a test 
has higher power, but was ultimately unnecessary; the conservative test 
conducted in the first round successfully rejected $H^{\emptyset,(2)}$.

If multiple hypotheses are rejected in a round, RH is not guaranteed to have 
selected the most significant feature first. Since the features were not truly 
sorted, it is unknown which of the two hypotheses rejected in the first pass 
actually had a smaller p-value; both p-values were merely smaller than 
$\alpha/m$. Selecting features in the wrong order is not of serious concern in 
the orthogonal case, because the same set of features will have been selected 
by 
the end of each testing pass. In nonorthogonal settings, however, test 
statistics change based on the model in which they are computed, so selecting 
features in a different order can lead to significantly different models.

\subsection{General Case: Nonorthogonal Data}

In nonorthogonal data, including covariates in the incorrect order can lead to 
significantly different models being selected. This is due to the test 
statistic 
being model dependent and is easiest to see by example. Table 
\ref{tab:step-order} gives the sequential p-values of all features in the 
prostate cancer data in different selected models. Two algorithms are compared: 
RH testing the features in sorted stepwise order (1), (2),$\ldots$,(8) 
(RH-sort), and RH testing the features in the reverse order (8), (7),$\ldots$, 
(1) (RH-rev). The reverse order provides a worst-case ordering for RH. In the 
table, hyphens indicate the features in the model.

\begin{table}
\centering
 \caption{Stepwise p-values after each step.}
 \label{tab:step-order}
 \begin{tabular}{r|ccc|ccc}
  \hline
 Feature (Step) & RH-0 & RH-sort-1 & RH-sort-2&RH-rev-1 &RH-rev-2 &RH-rev-3\\ 
  \hline
  lcavol (1)& 0.0000 & - & - & 0.0000 & 0.0000 & 0.0000 \\ 
  lweight (2)& 0.0000 & 0.0003 & - & 0.0000 & 0.0000 & 0.0000 \\ 
  svi  (3)& 0.0000 & 0.0410 & 0.0424 & 0.0000 & 0.0001 & 0.0000 \\ 
  lbph (4)& 0.0006 & 0.0041 & 0.1506 & 0.0010 & 0.0001 & - \\ 
  pgg45 (5)& 0.0000 & 0.1453 & 0.0758 & 0.0002 & 0.1330 & 0.1078 \\ 
  lcp (6)& 0.0000 & 0.7300 & 0.9494 & 0.0000 & - & - \\ 
  age (7)& 0.0027 & 0.7998 & 0.4649 & 0.1352 & 0.1331 & 0.7534 \\ 
  gleason (8)& 0.0000 & 0.6516 & 0.3592 & - & - & - \\ 
   \hline
\end{tabular}
\end{table}

Forward stepwise, RH-sort, and RH-rev consider the same p-values initially 
(step 
0), as no features have been added to the model. These p-values are computed 
from simple regressions between the response and the feature of interest using 
an independent estimate of the error variance. While all features fall below 
the 
RH threshold, lcavol has the lowest p-value. Therefore, forward stepwise and 
RH-sort select the same feature on step 1.
The p-values in the column RH-sort-1 are the stepwise p-values given that lcavol
is in the model. Again, RH-sort and forward stepwise select the same variable,
lweight, at the second step. Adjusting the stepwise p-values for the model
(lcavol, lweight) results in the column RH-sort-2. 
All of these p-values fall above the RH threshold for the third testing pass, so
the procedure terminates. The correspondence between RH-sort and forward
stepwise seen here is a general property: if RH tests variables in the order
determined by stepwise, then RH selects variables in the same order as stepwise.

RH-rev behaves significantly differently than RH-sort and forward stepwise. The
initial p-values it considers are identical, but RH-rev tests gleason first
and the test is rejected. The p-values in the column RH-rev-1 condition on
gleason being in the model. Proceeding in the reverse order, the test of age is
not rejected, but the test of lcp is. Column RH-rev-2 updates the stepwise
p-values given the model contains gleason and lcp. Using these p-values, lbph is
also rejected, and the process continues. In fact, RH-rev rejects all 8 
features. Given the ordering of the features, this is at least justifiable: 
each 
subsequent feature explains a significant reduction in ESS. Even after several 
features are in the model, lcavol provides unique information about the 
response. That being said, selecting all 8 features is clearly not desirable. 
The next section uses a different set of rejection thresholds and mimics 
stepwise regression precisely on these data: we identify the RH-sort model of 
\{lcavol, lweight\} regardless of the order in which features are tested.

In nonorthogonal data, one may also object to the updating done via Definition 
\ref{defn:nibble} because sequential p-values are relevantly different between 
steps. While the same explanatory feature is being tested, the null hypothesis 
and sequential p-value are model dependent. Furthermore, there is, in general, 
no guarantee that the conditioning statement in equation (\ref{eqn:nibble}) is 
accurate; a feature can become \emph{more} significant in the presence of other 
features.  The next section renders this critique obsolete by introducing a new 
series of testing thresholds: the adjustment made in equation 
(\ref{eqn:nibble}) 
makes a negligible change to the effecting testing level and can be ignored 
with a minimal reduction in power.

\section{Better Threshold Approximation}
\label{sec:rai}

To better approximate forward stepwise, initial testing passes need to search 
for more significant features. As forward stepwise searches for the feature 
which yields the maximal improvement in \Rs, we consider a procedure which 
tests 
for an increase in \Rs of $r^s$, $r \in (0,1)$ and $s$ is the testing pass. 
For example, if $r=1/2$, then the first testing pass tests for features which 
increase \Rs by 1/2, while the second pass tests for those yielding an increase 
of 1/4. This yields a geometrically decreasing sequence of bounds. By 
choosing $r$, this provides a set of algorithms which are collectively referred 
to as Revisiting Alpha-Investing (RAI).  In order to specify the stopping 
criterion and fully describe RAI, we need to introduce alpha-investing 
\citep{FosterS08}. Afterward, we provide our performance guarantee.

\subsection{Revisiting Alpha-Investing}
\label{app:mfdr-control}


Alpha-investing rules are similar to alpha-spending rules in that they are 
given 
an initial amount of alpha-wealth to be spent on hypothesis tests. Wealth can 
be 
considered as an allotment of error probability. Bonferroni allocates this 
error 
probability equally over all hypothesis, testing each one at level $\alpha/m$. 
In general, the amount spent on tests can vary. If $\alpha_i$ is the amount of 
wealth spent on test $H_i$, FWER is controlled when \[\sum_{i=1}^m  \alpha_i 
\leq \alpha.\] 

In clinical trials, alpha-spending is useful due to the varying
importance of hypotheses. For example, many studies include both primary and
secondary endpoints. As the primary endpoint is the most important
hypothesis, the majority of the alpha-wealth can be spent on it, providing
higher power. Alpha-spending rules can allocate the remaining wealth equally 
over the secondary hypotheses. FWER is controlled and the varying importance of 
hypotheses is acknowledged. Interested readers are referred to 
\citet{DmitrienkoTB10} and references therein for more on FWER control 
procedures.

Alpha-investing rules are similar to alpha-spending rules except that
alpha-investing rules earn a return, or contribution to their alpha-wealth, of
$\omega\leq\alpha$ when tests are rejected. Therefore, the alpha-wealth after
testing hypothesis $H_i$ is 
\begin{IEEEeqnarray*}{rCl}
 W_{i+1} & = & W_i - \alpha_i + \omega R_i
\end{IEEEeqnarray*}
An alpha-investing strategy uses the current wealth and the history of previous
rejections to determine which hypothesis to test and the amount of wealth that
should be spent on it.

Intuitively, alpha-investing rules spend error probability in search of false
null hypotheses to reject. Each false null that is rejected allows $\alpha$ more
incorrect rejections in expectation. Alpha-investing rules merely need to spend
more wealth (error probability) than the probability of error they incur. In
some sense, this behavior is present in all procedures which control a
proportion of false rejections. For example, if it is known that the first 9
rejections were of false hypotheses, then any 10th hypothesis can be rejected
while controlling the proportion of false rejections at .1.

The only assumption an alpha-investing rule requires in order to control mFDR 
is 
that each test must be conditionally level-$\alpha$ given the sequence of 
rejections: \[\E(V_i|R_1,\ldots,R_{i-1}) \leq \alpha_i.\] \citet{FosterS08} 
assume tests are independent. Theorem \ref{thm:valid-tests} allows 
alpha-investing to be used in many more scenarios. Therefore, once Theorem 
\ref{thm:valid-tests} is proven, RH and RAI control mFDR by virtue of being 
alpha-investing rules.

Viewing the Holm step-down procedure as an alpha-investing rule yields RH. Given 
initial alpha-wealth $\alpha$ and return $\omega=\alpha$, test all hypotheses at 
the Bonferroni level, $\alpha/m$. This exhausts all alpha-wealth, so that the 
procedure terminates if no rejections are made. If a rejection is made, the 
procedure earns a return equal to $\alpha$ and only $m-1$ hypotheses remain. The 
wealth is again split evenly among all remaining hypotheses, yielding the 
Bonferroni threshold over $m-1$ hypotheses of $\alpha/(m-1)$. If no rejections 
are made in a round, then the alpha-investing rule is out of wealth and the 
algorithm terminates. 

RAI merely provides a different sequence of levels at which to test. Psuedocode 
for the procedures is given in Algorithm \ref{alg:RAI}. Note that the 
implementation in R \citep{R_2019, rai_R} makes slight modifications to this for 
practical performance improvements. Most importantly, it uses a conservative 
estimate of error variance $\hat \sigma$ by using the residuals from model $M$ 
instead of $M \cup j$. This prevents $\hat\sigma$ from being recomputed for 
every hypothesis test. While there is a concomitant 
loss of power, the performance improvement is significant and we still observe 
strong results. RAI is well defined in any model in which it is possible to test 
the addition of a single feature such as generalized linear models. The testing 
thresholds ensure that the algorithm closely mimics forward stepwise, which 
provides the performance guarantees of the next subsection.
\begin{algorithm}
 \caption{Revisiting Alpha-Investing (RAI)}
 \label{alg:RAI}
 \begin{algorithmic}
  \STATE {\bfseries Input:} Feature matrix $\X$, response $Y$, threshold level 
$r$, mFDR level $\alpha$ 
  \STATE {\bfseries Output:} Model corresponding to a set of features $M 
\subset 
[m]$
  \STATE {\bfseries Set:} $M = \emptyset$, $s = 1$, wealth = $\alpha$
  \WHILE {$|M| \leq m$}
  \STATE Set $\alpha_s = f(r^s)$ \COMMENT{Alpha determined from \Rs threshold; 
See Appendix} 
  \FOR[Loop is a testing ``round'' or ``pass'']{$j$ in $[m] 
\backslash M$}
  \STATE wealth = wealth - $\alpha_s$ \COMMENT{Pay for test} 
  \IF {wealth $<$ 0}
  \STATE {\bfseries Return:} $M$  \COMMENT{Early termination}
  \ENDIF 
  \IF {$X_j$ increases $\text{R}^2_M$ by $r^s$}
  \STATE $M = M \cup j$; wealth = wealth + $\alpha$
  \ENDIF
  \ENDFOR
  \STATE $s$ = $s$ + 1 \COMMENT{Next testing pass}
  \ENDWHILE
  \STATE {\bfseries Return:} $M = [m]$ \COMMENT{All hypotheses rejected} 
 \end{algorithmic}
\end{algorithm}

Approximating stepwise using these thresholds has many practical performance
benefits. First, multiple passes can be made without any rejections before the
algorithm exhausts its alpha-wealth and terminates. The initial tests are
extremely conservative but only spend tiny amounts of wealth; however, tests
rejected in these stages still earn the full return $\omega$. This ensures that
wealth is not wasted too quickly when testing true null hypotheses. Furthermore,
false hypotheses are not rejected using significantly more wealth than is
required. An alternative construction of alpha-investing makes this latter
benefit explicit and is explained in \citet{FosterS08}. Taken together, this
improves power in ways not addressed by the theorem in the next section. By
earning more alpha-wealth, future tests can be conducted at higher power while
maintaining mFDR control.

RAI performs a sequential search for sufficient model improvement as opposed to
the global search for maximal improvement performed by forward stepwise. Most
sequential, or online, algorithms are online in the observations, whereas RAI is
online in the \emph{features}. This allows features to be generated dynamically
and allows extremely large data sets to be loaded into RAM one feature at a
time. As such, RAI is trivially parallelizable in the MapReduce setting, similar
to \citep{Kumar+13}. For example, many processors can be used, each considering
a disjoint set of features. Control need only be passed to the master node
when a significant feature is identified or a testing pass is completed.
Parallelizing RAI will be particularly effective in extremely sparse models,
such as those considered in genome-wide association studies. Online feature
generation is beneficial when features are costly to generate and can be used
for directed exploration of complex spaces. This is particularly useful when
querying data base or searching interaction spaces as in Section
\ref{sec:poly}.

Variance inflation factor regression (VIF) \citep{LinFU11} computes stepwise 
t-statistics extremely quickly with little loss in accuracy. With this 
enhancement, RAI performs forward stepwise and model selection in $O(nm\log(n))$ 
time as opposed to the $O(nm^2q^2)$ required for traditional forward stepwise, 
where $q$ is the size of the selected model. The log term is an upper bound on 
the number of testing passes performed by RAI. This is significantly reduced for 
large $n$ by recognizing when passes may be skipped, which is possible whenever 
a full pass is made without any rejections. The control provided by 
alpha-investing is maintained, because RAI must pay for all of the skipped 
tests. Using this computational shortcut, we find that only 7-10 passes are 
required to select a model using RAI.

 
\subsection{Approximation Guarantee}

This subsection bounds the performance of RAI and requires additional notation. 
We will often need to consider a feature $\X_i$ orthogonal to those currently 
in 
the model, $\X_M$. This will be referred to as adjusting $\X_i$ for $\X_M$ and 
the corresponding feature is denoted $\X_{i.M} = P_{M^\perp}\X_i$. 
This same notation holds for sets of variables: $\X_A$ adjusted for $\X_M$ is 
$\X_{A.M} = P_{M^\perp}\X_A$.

RAI is proven to perform well if the improvement in fit obtained by adding a 
set 
of features to a model is upper bounded by the sum of the improvements of 
adding 
the features individually. If a large set of features improves the model fit 
when considered together, this constraint requires some subsets of those 
features to improve the fit as well. Consider the improvement in model fit by 
adding $\X_S$ to the model $\X_M$: \[\Delta_M(S) := \mRs(S \cup M) - \mRs(M).\] 
Letting $S = A \cup B$, we bound $\Delta_M(S)$ as
\begin{IEEEeqnarray}{rCl}
 \Delta_M(A) + \Delta_M(B) & \geq & \Delta_M(S). \label{eqn:submod.int} 
\end{IEEEeqnarray}

If $A\cup B$ improves the model fit, equation (\ref{eqn:submod.int})
requires that either $A$ or $B$ improve the fit. Therefore, signal that is
present due to complex relationships among features cannot be completely hidden
when considering subsets of these features.
Equation (\ref{eqn:submod.int}) defines a submodular function:
\begin{defn}[Submodular Function]
 Let $F: 2^{[m]} \rightarrow \bbR$ be a set function defined on the the power
 set of
 $[m]$. $F$ is submodular if $\forall A,B \subset [m]$
 \begin{IEEEeqnarray}{rCl}
  F(A) + F(B) & \geq & F(A \cup B) +F(A \cap B) \label{eqn:submod}
 \end{IEEEeqnarray}
\end{defn}
This can be rewritten in the style of (\ref{eqn:submod.int}) as
\begin{IEEEeqnarray*}{rCl}
 F(A) - F(A\cap B) + F(B) - F(A \cap B) & \geq & F(A \cup B) - F(A \cap B)\\
 \Rightarrow \Delta_{A\cap B}(A) + \Delta_{A\cap B}(B) 
 & \geq & \Delta_{A\cap B}(A\cup B),
\end{IEEEeqnarray*}
which considers the impact of $A \backslash B$ and $B \backslash A$ given $A
\cap B$. Given (\ref{eqn:submod.int}), it is natural to approximate the
maximizer of a submodular function with a greedy algorithm. We provide a proof
of the performance of RAI by assuming that \Rs is approximately submodular. 

In order for these results to hold even more generally, the definition of 
submodularity can be relaxed \citep{DasK11}. To do so, iterate 
(\ref{eqn:submod.int}) until the left hand side is a function of the influences 
of individual features and only require the inequality to hold up to a 
multiplicative constant $\gamma \geq 0$. Given a model $M$, consider 
adding the features in $A=\{a_i,\ldots,a_l\} \subset [m]\backslash M$. 
Hence $\Delta_M(a_i)$ is the marginal increase in \Rs by adding $a_i$ to model 
$M$. When data is normalized, $\Delta_M(a_i)$ is the squared 
partial-correlation 
between the response $Y$ and $a_i$ given $M$: $\Delta_M(a_i) = \text{Cor} 
(Y,a_{i.M})^2$. Define the vector of partial correlations as 
$r_{Y,A.M} = \text{Cor}(Y, A.M)$, then the sum of individual contributions to 
\Rs is $\|r_{Y,A.M}\|_2^2$. Similarly, if we define $C_{A.M}$ as the 
correlation 
matrix of $A.M$, then $\Delta_M(A) = r_{Y,A.M}'C_{A.M}^{-1}r_{Y,A.M}$.

\begin{defn}(Submodularity Ratio)
 The submodularity ratio, $\gamma_{sr}$, of \Rs with respect to a model $M$ and 
$k \geq 1$ is
 \begin{IEEEeqnarray*}{rCl}
  \gamma_{sr}(M,k) & = & \min_{(S:S\cap M = \emptyset, |S| \leq k)}
  \frac{r_{Y,S.M}'r_{Y,S.M}}{r_{Y,S.M}'C_{S.M}^{-1}r_{Y,S.M}}
 \end{IEEEeqnarray*}
\end{defn}
The minimization identifies the worst case set $S$ to add to the model $M$. It 
captures how much \Rs can increase by adding $S$ to $M$ (denominator) compared 
to the combined benefits of adding its elements to $M$ individually 
(numerator). 
If $M$ is the size-$k$ set selected by forward stepwise, then \Rs is 
approximately submodular if $\gamma_{sr}(M,k) > \gamma$, for some constant 
$\gamma > 0$. We will refer to data as being approximately submodular if \Rs is 
approximately submodular on the data. \Rs is submodular if $\gamma_{(M,2)} \geq 
1$ for all $M \subset [m]$ \citep{John+15sub}. This definition is extremely 
similar to that of \citet{DasK11}, but slightly more refined.

Our main theoretical result provides a performance guarantee for a slightly 
modified version of RAI.
RAI has to be modified in order to account for the fact that 
there is, in general, no constraint on the behavior of p-values for tests of 
the 
same feature in different models. The submodularity ratio will allow us to make 
some claims about the improvement in \Rs when adding sets of features, but the 
control it provides on the change of individual p-values is quite poor, 
particularly if the model has changed significantly between two tests of the 
same feature. While the previous section demonstrates that we can still use the 
p-value in order to test a hypothesis, it does not specify any relationship 
between the observed p-values for testing $H^{M,j}$ and $H^{M',j}$, when $M 
\neq M'$.

The modified procedure, \RAIp, is almost identical to RAI, in that it uses an 
increasing sequence of threshold values and tests features sequentially. The 
testing threshold, however, is not increased until all features fail to be 
rejected in a \emph{single} model. The performance of this procedure is 
effectively identical to RAI. That being said, the performance guarantee 
requires a single model in which features can be compared. Therefore, we will 
refer to a testing pass of \RAIp as all tests using a given threshold. This 
single pass may cycle through all of the features multiple times. In the worst 
case, \RAIp cycles through all features in order to reject only one feature, 
thus not making any computational improvement over stepwise regression. This, 
however, is highly unlikely and is never observed in our examples. Furthermore, 
once all features have been tested in the same model, \RAIp can skip to the 
round in which the next feature would be rejected. Therefore in practice, \RAIp 
and RAI perform approximately the same number of computations.

We restate the essential components of Theorem \ref{thm:performance} to make 
the 
subsequent discussion easier to follow. The model selected by \RAIp, $M_l$, 
satisfies $\mRs(M_l) \geq (1-e^{l/c})\mRs(M_k^*)$, where $c = \left( 
\frac{\iota+k}{\gamma r} - \iota \right)$ and $\iota$ is the maximum number of 
features rejected in a testing pass. While the proof is deferred to Appendix 
\ref{app:performance}, a few remarks are in order.
\begin{enumerate}
 \item This bound holds for any number of rejected features $l$. Therefore, the 
result in some sense mirrors the results of alpha-investing in which type-I 
error is controlled at any stopping time. It is more flexible, however, as $l > 
k$ can be considered. In this way, one can use RAI or \RAIp to over-estimate 
the 
support of the true model while still maintaining a performance guarantee.
 \item In usual application, $l$ is actually chosen adaptively and $M_l$ would 
presumably include many of the features of $M_k^*$. Neither of these facts are 
leveraged in the proof; it is a worst-case guarantee assuming that $M_l \cap 
M_k^* = \emptyset$.
 \item The value $\iota$ is upper bounded by $l$, but in practice is far 
smaller. Furthermore, it is computed while running \RAIp. Therefore the exact 
value can be used in the bound after the procedure has terminated. 
Alternatively, $l$ could be chosen such that $\iota$ is small. This may be 
helpful when the last testing pass rejects a large number of features, for 
example.
 \item The proof demonstrates an important fact that also arises in SURE 
\citep{FanL08}: it is important to bound the effect of adding sets of features 
at the same time, not the effect of adding individual features. Stronger claims 
could be made if we were willing to make the style of assumptions in SURE, in 
which they bound characteristics of each individual feature. Instead, we opt 
for 
a bound on the submodularity ratio, which is weaker but still provides a 
performance guarantee. \end{enumerate} 

\subsection{Exact Forward Stepwise}
\label{sec:exactFS}

In this section, we further examine the setting $r \to 1$. This allows RAI to 
exactly mimic 
forward stepwise. Furthermore, the algorithms need not actually be run as the 
resulting behavior can be computed in closed form.  As $r \to 1$, RAI conducts 
tests at level $\alpha_s \to 0$, $\forall s$. For concreteness, suppose that 
$\alpha_s = \delta >0$, $\forall s$. Note that repeatedly testing a hypothesis 
at level $\delta$ leads to an approximately linear increase in the rejection 
threshold by Definition \ref{defn:nibble}, because $2\delta - \delta^2 \approx 
2\delta$. For sufficiently small $\delta$, this procedure selects variables in 
the same order as forward stepwise. 

Consider the amount of wealth spent to reject a null hypothesis $H_0$ if its 
p-value is $p_0$. Each failed rejection implies that $p_0$ is in the upper 
$(1-\delta)$ portion of its feasible region, which is initially $[0,1]$. If 
$H_0$ is rejected after $q$ tests, a Taylor approximation provides
\begin{IEEEeqnarray*}{rCl}
 (1-\delta)^q & = & 1-p_0\\
 \Rightarrow q\delta & \approx & -\log(1-p_0). 
\IEEEyesnumber\label{eqn:eps.nibble}
\end{IEEEeqnarray*}
While $H_0$ could have been rejected by spending $p_0$ initially, $-\log(1-p_0) 
> p_0$ was spent on the rejection. If $p_0$ is small, the amount of 
alpha-wealth 
wasted by revisiting is minor, but larger p-values waste significant wealth. 

Combining the above claims, the results of RAI can be derived in closed form. 
Since hypotheses are rejected in the stepwise order using the sequential 
p-values, suppose $H_{[m]}$ is the set of hypotheses ordered by forward 
stepwise 
with corresponding p-values $p_{[m]}$. The first hypothesis is only rejected if 
RAI spends $\delta$ to test all hypotheses until a total of $-\log(1-p_1)$ has 
been spent on each test. Similarly, the second hypothesis is only rejected if 
RAI continues to spend $\delta$ to test all remaining hypotheses until 
$-\log(1-p_2)$ has been spent. Note that we ignore the update of equation 
(\ref{eqn:nibble}) as the hypotheses are not assumed to be independent and the 
update is significant in this case. The resulting procedure, Stepwise-RAI 
(S-RAI), is given in Algorithm \ref{alg:S-RAI}.
 \begin{algorithm}
  \caption{Stepwise Revisiting Alpha-Investing (S-RAI)}
  \label{alg:S-RAI}
  \begin{algorithmic}
   \STATE {\bfseries Input:} Hypotheses determined by forward stepwise, 
$H_{[m]}$ and corresponding sequential p-values, $p_{[m]}$.
   \STATE {\bf Set:} $\hat k = \max\{ k $ s.t. $-\sum_{i=1}^l (m-i+1)\log(1-p_i) 
 < l\alpha$, $\forall l \leq k\}$
   \STATE {Reject $H_1,\ldots,H_{\hat k}$}
  \end{algorithmic}
 \end{algorithm}

Given a full set of p-values as those in Table \ref{tab:prostate}, selecting a
model using hypothesis testing requires rejecting an initial contiguous set of
hypotheses. If hypotheses are ordered numerically, $H_2$ and $H_4$ cannot be the
only rejections. The sets $\{H_1,H_2\}$ or $\{H_1,\ldots,H_4\}$ are possible
rejection sets that identify forward stepwise models. As p-values are
not necessarily sorted by size as required by the BH procedure,
controlling FDR under this constraint is nontrivial. \citet{GSell+15} transform
p-values such that they are ordered and use BH on the transformed
p-values. The model selection criteria they consider is ForwardStop (FS), which
rejects hypotheses
$H_1,\ldots,H_{\hat k}$ where
\begin{IEEEeqnarray}{rCl}
 \hat k & = & \max_{k\in\{1,\ldots,m\}} -\frac{1}{k}\sum_{i=1}^k \log(1-p_i) 
\leq \alpha. \label{eqn:fs}
\end{IEEEeqnarray}

Observe that the p-value transformations in equations (\ref{eqn:eps.nibble}) 
and 
(\ref{eqn:fs}) are identical. The conversion necessary to apply FS is 
equivalent 
to spending wealth in a wasteful way. This wastefulness is one explanation for 
the extremely conservative behavior of FS. Given the adjusted p-values in Table 
\ref{tab:prostate} are also conservative in large regions of the parameter  
space, it is not surprising that the combined procedure is highly conservative 
as demonstrated in the next section. 

Under independence, S-RAI simplifies and looks very similar to FS, because the 
wealth spent on the unrejected hypotheses can be carried over to further rounds. 
Therefore, the second hypothesis would be rejected by S-RAI when an additional 
$-\log(1-p_2)-\log(1-p_1)$ has been spent. The result of this simplification is 
given in Algorithm \ref{alg:SI-RAI}. This provides a second way of viewing FS 
through a sequential lens: it performs S-RAI assuming independence but adapts to 
the correct subset of features. Specifically, if there were only $\hat k$ 
features instead of $m$ features, the stopping rule for FS and Algorithm 
\ref{alg:SI-RAI} would be identical. 

\begin{algorithm}
 \caption{Stepwise Revisiting Alpha-Investing (S-RAI): Independent P-values}
 \label{alg:SI-RAI}
 \begin{algorithmic}
  \STATE {\bfseries Input:} Hypotheses determined by forward stepwise, 
$H_{[m]}$ 
and corresponding sequential p-values, $p_{[m]}$.
  \STATE {\bf Set:} $\hat k$ = $\max\{ k | -\sum_{i=0}^{l-1} \log(1-p_i) + 
(m-k+1)-\log(1-p_l) < l\alpha, \forall l \leq k\}$
  \STATE {Reject $H_1,\ldots,H_{\hat k}$}
 \end{algorithmic}
\end{algorithm}
 
\subsection{Comparing Methods}
\label{sec:compare}

\mynotesH{need to check for other methods to potentially include}

Before comparing the performance of the algorithms on data, we must discuss the 
interpretation of sequential tests. When features are correlated, the truth 
value of the null hypothesis for testing feature $j$ in model $M$, $H^{M,j}$, 
may depend on $M$. Hence $H^{M,j}$ may be false given the currently model M, 
and $\X_j$ may be correctly included at that step. Within a later active model, 
$M' \supset M$, $H^{M',j}$ may be true. Thus, a correct selection at a given 
step may become ``incorrect'' as the process proceeds, and vice-versa. This 
phenomenon is due to deviations from submodularity and is often called 
suppression \citep{John+15sub}. Our measures of false rejections are 
\emph{model dependent}. We are not penalized if a feature which was 
correctly rejected is later deemed an incorrect rejection: the quality of a
decision is determined at the time in which the decision was made.


While there are concerns over the ``exact'' description of the adjusted 
p-values 
in Table \ref{tab:prostate}, one could still use them for model selection. This 
is particularly salient as the R package selectiveInference
\citep{selInf_R} which implements the procedure reports a statistic that does 
precisely that. Similarly, RH is only exact under orthogonality. We are 
nevertheless interested in their empirical performance when that assumption is 
violated. Conversely, RAI and \RAIp use intentionally conservative tests by not 
adjusting for selection as in Definition \ref{defn:nibble}. 

Our simulated data has $n=400$ observations, $m \in \{100, 200\}$ features, of 
which 
$k \in \{20, 40\}$ have a non-zero parameter value. Each feature with a 
non-zero 
parameter has a correlated counterpart with a parameter of zero. We consider 
correlations between each pair of features $\rho \in \{.2,.8\}$. The remaining 
features are orthogonal to all others in the population model. As $\X$ is 
generated randomly, the maximal correlation between such orthogonal features in 
a given data set is still approximately .2. 


Signal strengths were chosen to be close to the RIC threshold \citep{FosterG94}. 
 The high-signal case sets the magnitude of non-zero parameters equal to 
$\sqrt{4\log(m)}/\sqrt{n}$ and the low-signal case sets them to 
$\sqrt{2\log(m)}/\sqrt{n}$. True features in the high-signal case have 
t-statistics in the true model in the range $[2,7]$, while the low-signal case 
produces t-statistics in the true model in the range in the range $[1,5]$. 
As a 
final complication, the signs of the nonzero parameters alternate as this helps 
produce deviations from submodularity.

Comparisons are made between forward stepwise models selected via six 
procedures: forward stepwise stopped by CP, RH, RAI, \RAIp, FS, and Stepwise 
Holm (equivalently Max-t or SH). Stepwise Holm merely compares the sequential 
p-value to the Holm threshold without any adjustments for selection. Since RH, 
RAI, and \RAIp depend on the order in which features are tested, we provided a 
worst case ordering: features are reordered according to the stepwise selection 
path, such that $\X_i$ is selected on the $i$'th step. Features are then tested 
from $\X_m$ to $\X_1$. This ensures that all ``incorrect'' features must be 
tested before the ``correct'' stepwise feature. 

We report four statistics as performance measures computed over 100 repetitions 
of each data generating scenario: FDR, mFDR, power, and the proportion of 
stepwise features selected. Note that an individual false rejection is a 
function of our null hypothesis:  if the correct feature in a group has not 
been included, then including the correlated counterpart is not considered a 
false rejection. Due to the correlation between features, such a feature 
legitimately improves the predictive performance of the resulting linear model. 
This is the same definition of false rejections used by \citet{GSellHT13}.

We compute power as the proportion of features with a non-zero parameter 
value in the true model that are included in the final model. This allows us 
to consider correct selections in a classical setting in which a true model 
$M^*$ with parameters $\beta^*_{M^*}\ne0$ exist. If an algorithm selects model 
$M$, then 
\[\text{Power} := \frac{|M \cap M^*|}{|M^*|}.\]

The proportion of stepwise features selected is only reported for RH, RAI, and 
\RAIp. It measures the proportion of features in the selected model $M$ that 
are in the forward stepwise model \emph{of the same size}. Given the ordering 
of 
our data matrix $\X$, and a selected model $M$ of size $l$, this corresponds to 
\[\text{Proportion of Stepwise Features} := \frac{|\X_{1:l} \cap M|}{l}.\]

Figure \ref{fig:perf_sim} shows the FDR and power of the six methods over 
the simulation settings. The x-axis is ordered such that the power of RAI is 
decreasing left-to-right. As such, we call this dimension the ``difficulty'' of 
the data scenario and the mapping between difficulty and simulation settings is 
given in Table \ref{tab:difficulty}. FDR is shown here for its familiarity and 
to demonstrate that FDR control and mFDR control are extremely similar in 
practice. See Figure \ref{fig:perf_sim} for graphs involving mFDR.

\begin{table}[ht]
 \centering
 \caption{Simulation settings and corresponding difficulty rating determined by 
the power of RAI. Note that difficulty essentially sorts via signal strength 
then correlation.}
 \label{tab:difficulty}
 \begin{tabular}{llllr|llllr}
  \hline
k & m & Signal & $\rho$ & Difficulty & k & m & Signal & $\rho$ & Difficulty \\ 
  \hline
  40 & 100 & high & 0.2 &   1 & 40 & 100 & low & 0.2 &   9 \\ 
  40 & 200 & high & 0.2 &   2  &  20 & 100 & low & 0.2 &  10 \\ 
  20 & 100 & high & 0.2 &   3  & 40 & 200 & low & 0.2 &  11 \\ 
  20 & 200 & high & 0.2 &   4  & 40 & 100 & low & 0.8 &  12 \\ 
  40 & 100 & high & 0.8 &   5  & 20 & 200 & low & 0.2 &  13 \\ 
  20 & 100 & high & 0.8 &   6  & 20 & 100 & low & 0.8 &  14 \\ 
  20 & 200 & high & 0.8 &   7  & 20 & 200 & low & 0.8 &  15 \\ 
  40 & 200 & high & 0.8 &   8  & 40 & 200 & low & 0.8 &  16 \\ 
  \hline
 \end{tabular}
\end{table}

\begin{figure}
 \centering
 \caption{Performance measures. The x-axis is ordered such that the power of 
RAI is decreasing left-to-right. The simulation settings corresponding to this 
ordering are in Table \ref{tab:difficulty}. To demonstrate its impact, RAI uses 
the conditioning update of equation (\ref{eqn:nibble}) while \RAIp is 
conservative and does not.}
 \label{fig:perf_sim}
   \centerline{\includegraphics[width=\textwidth]{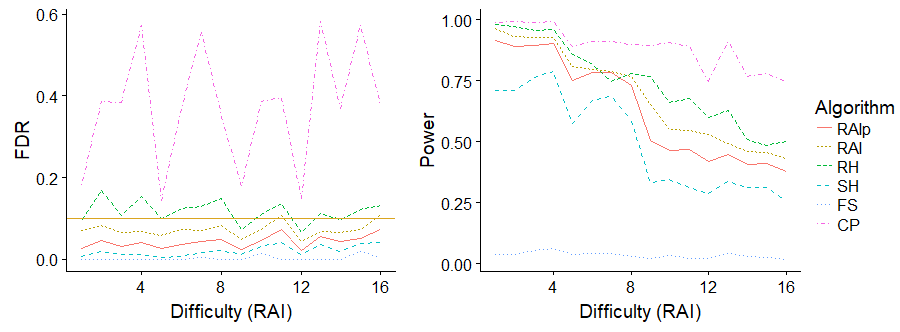}}
\end{figure}

There are many messages conveyed by Figure \ref{fig:perf_sim}. First, while
SH is conservative, it still performs very well. This is especially telling as
\citet{Taylor+14} demonstrate that the test statistic is highly conservative
after multiple steps. As pointed out in Section \ref{sec:inf4ms}, this is 
because future tests only occur if previous tests have been rejected. On this 
subset of cases, the test-statistics considered by SH are much less 
constrained. 
SH has significantly higher power than FS, which is so conservative as to 
barely reject any hypotheses in any scenario. 

Second, RAI and \RAIp have high power while controlling FDR. Given that all 
methods either select a model on the stepwise path or an approximation of it, 
there is a necessary trade-off between power and FDR. If there is a mix of 
incorrect and correct features along the stepwise path, then the only way to 
include more correct features is to make false rejections. Therefore, one must 
separate the blame, so to speak, of the performance of the procedures. Part of 
it is the result of the testing procedure, but part is due to stepwise. The 
latter is suffered by all algorithms. That being said, our revisiting 
procedures 
solve the former far better than the competitor FS. Lastly, note that the FDR 
of 
CP fluctuates so rapidly because it is sensitive to the number of features $m$.


Figure \ref{fig:perf_sim} demonstrates the control provided in Corollary 
\ref{cor:mfdr-control} and our close approximation of forward stepwise. Both 
Figure \ref{fig:perf_sim} and Figure \ref{fig:perf_sim} show that RH 
achieves high power but at the cost of losing mFDR control.  As expected, it 
does not mimic forward stepwise well; only approximately 65-70\% of the 
selected 
features are along the stepwise path of the same length. RAI and \RAIp, on the 
other hand, include a high proportion of the forward stepwise features but 
still 
control false discoveries. Note that all three revisiting procedures produce 
models that have effectively the exact same \Rs as the stepwise model of the 
same size, merely with different covariates.

\section{Searching Interaction Spaces}
\label{sec:poly}

As an application of RAI, we demonstrate a principled method to search 
interaction spaces while controlling type-I errors. This has become a question 
of increasing interest in searching for interaction effects between 
genes \citep{genesInteract11, genesInteract15}. In this case, submodularity 
is merely a formalism of the principle of marginality \citep{Nelder77}: if an 
interaction between two features is included in the multiple regression, the 
constituent features should be as well. This reflects a belief that an 
interaction is only informative if the marginal terms are as well. RAI can 
perform a greedy search for main effects, while maintaining the flexibility to 
add polynomials to the model that were not in the original feature space. 
Therefore, we search interaction spaces in the following way: run RAI on the 
marginal data $\X$; for $i,j\in[m]$, if $\X_i$ and $\X_j$ are rejected, test 
their interaction by including it in the stepwise routine. We allow $i=j$ so 
that polynomials of a single feature are also included. This bypasses the need 
to explicitly enumerate the interaction space, which is computationally 
infeasible for large problems. Furthermore, as Table \ref{tab:concrete} shows 
for the concrete compressive strength data, it can be highly beneficial to only 
consider relevant portions of interaction spaces, as the full space is often 
too 
complex. This procedure for searching interactions was also considered in 
\citet{HaoZ17}. We provide an efficient algorithm, an associated R package,
and the following performance guarantee for the method.

As \RAIp is a conservative procedure, it often makes a complete pass through 
the data using a fixed model before termination. In this case, it is easy to 
bound the improvement that any single feature can provide, and this is 
actually computed automatically by the package. While we cannot 
prove that the selected model performs as well as the optimal 
model including all possible interactions, we are able to compare to any 
possible model among the set of interactions we have currently tested. Lemma 
\ref{lem:Rsbnd} from Appendix \ref{app:performance} yields the following 
corollary:
\begin{corl}
 \RAIp selects a set of features $M$ such that adding any features $S$ 
that have been tested for addition to $M$ yields an improvement which is upper 
bounded as
  \begin{IEEEeqnarray*}{rCl}
  \mRs(M \cup S) - \mRs(M) & \leq &
  (1-\mRs(M))\frac{|S|r^{s-1}}{\gamma(M,|S|)}.
  \end{IEEEeqnarray*}
\end{corl}

\subsection{Simulated Data}
\label{sec:simulated-data}

Simulated data is used to demonstrate the ability of RAI to identify polynomials
in complex spaces. Our simulated explanatory features have the following
distribution:
\[\X_{i,j} \sim N(\tau_j,1) \quad \text{where} \quad \tau_j \sim N(0,4).\]
The true mean of $Y$, $\mu_Y$, includes four terms which are polynomials  in
the first ten marginal features:
\begin{IEEEeqnarray*}{rCl}
 Y & = & \epsilon\\
 \mu_Y & = & \beta_1\X_1\X_2 + \beta_2\X_3\X_4^2 + \beta_3\X_5\X_6^3 +
 \beta_4\X_7\X_8\X_9\X_{10}\\
 \epsilon & \sim & N(0,\I)
\end{IEEEeqnarray*}
The coefficients $\beta_1,\ldots,\beta_4$, are equal given the norm of the
interaction and are chosen to yield a true model \Rs of approximately .83. The
t-statistics of features in the true model range between 25 and 40. It is 
important to note that the features are tested ``back to front,'' meaning that 
features one through ten are considered last. This is a worst case ordering for 
our methods, as they are forced to test all irrelevant features first.

We first simulate a small-p environment: 2,000 observations with 350 explanatory
features. While our features are simulated independently, the maximum observed
correlation is approximately .14. While many competitor algorithms are compared
on the real data, only two are presented here for simplicity. Our goal is to
demonstrate the gains from searching complex spaces using feature selection
algorithms. Five algorithms are compared: RAI searching the interaction space,
the Lasso, random forests \citep{Brei01}, the true model, and the mean model.
The mean model merely predicts $\bar Y$ in order to bound the range of
reasonable performance between that of the true model and the mean model. Two
Lasso models are compared: the one with minimum cross-validated error (Lasso.m)
and the smallest model with cross-validated error within one standard deviation
of the minimum (Lasso.1). Since the feature space is small, it is possible
to compute the full interaction space of approximately 61,000 features. Lasso
is given this larger set, while RAI and random forests are only given the 350
marginal variables. Random forests is included such that comparison can be made
to a high-performance, off-the-shelf procedure that also constructs its
own feature space.

Figure \ref{fig:perf} compares the risk of all procedures and the size of the 
model produced by the feature selection algorithms. The risk is computed using 
squared error loss from the true mean: $\|\mu_Y-\hat Y\|_2^2$. RAI often 
outperforms the competitors even though it is provided with far less 
information, while the success of Lasso demonstrates the strength of 
correlation 
in this scenario. Even though Lasso can only accurately include the interaction 
$\X_1\X_2$, it is able to perform reasonably well in some cases. Figure 
\ref{fig:perf} resamples the data 50 times, creating cases of varying 
difficulty. Often, difficult cases are challenging for all algorithms, such 
that 
the highest risk data set is the same for all procedures. RAI performs better 
than Lasso.m on the majority of cases and almost always outperforms Lasso.1. 
The 
overlapping box plots merely demonstrates the variability in the difficulty of 
data sets.

It is also worth comparing the size of the model selected by different
procedures. The Lasso often selects a very large number of variables to account
for its inability to incorporate the correct interactions. As demonstrated 
explicitly in the concrete compressive strength example in the 
introduction, this is a general problem even when the
Lasso is provided the higher-order interactions. Using the model
identified by Lasso.1 dramatically reduces model size with a concomitant
increase in loss. Contrast this with RAI, which selects a relatively small
number of features even though its search space is conceptually infinite, as no
bounds on complexity of interactions is imposed. Furthermore, RAI necessarily
selects more than four features in order to identify the higher order terms. For
example, in order to identify the term $\X_7\X_8\X_9\X_{10}$, all four marginal
features need to be included, as well as other interactions of the form 
$\X_7\X_8$, etc. In fact, few third-order interactions are considered. 
Therefore 
the true fitted space is not much larger than that considered by the Lasso. RAI 
performs better in this case because it does not need to consider the full 
complexity of the 61,000 features in the interaction space.

\begin{figure}
\begin{center}
 \begin{subfigure}{.24\textwidth}
  \centerline{\includegraphics[width=\textwidth]{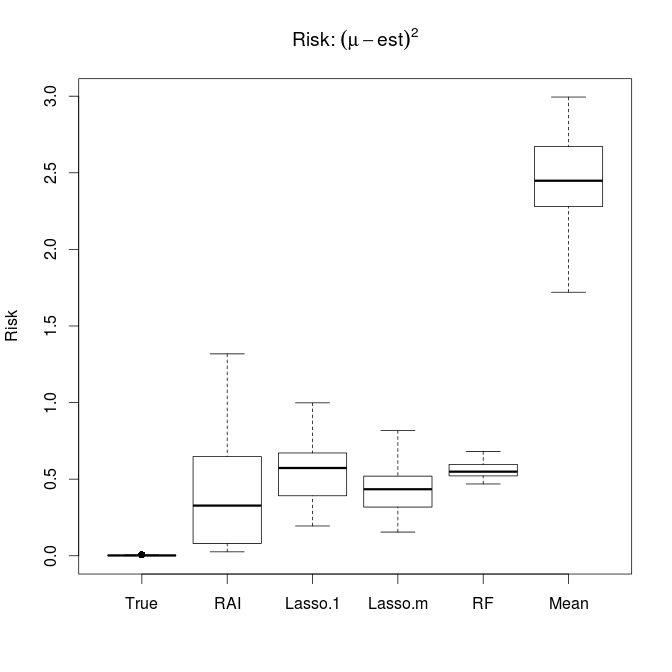}}
  \caption{Risk: $p=350$}
 \end{subfigure}
 \begin{subfigure}{.24\textwidth}
  \centerline{\includegraphics[width=\textwidth]{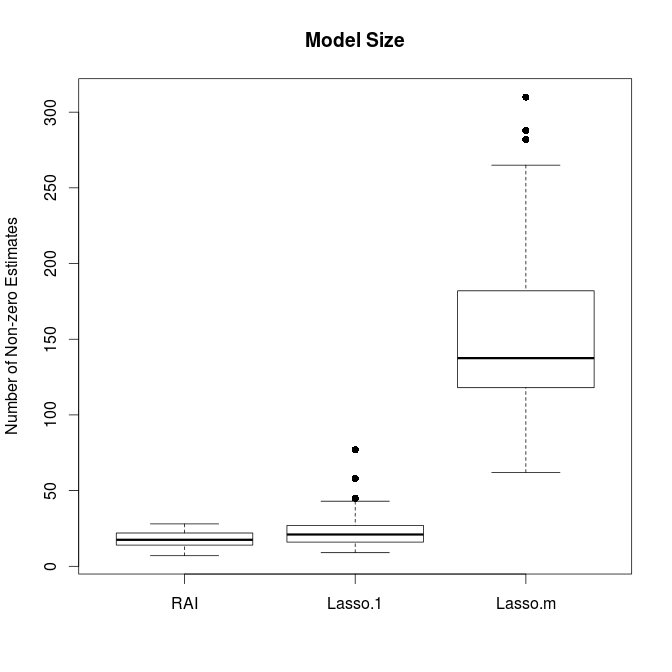}}
  \caption{Size: $p=350$}
 \end{subfigure}
  \begin{subfigure}{.24\textwidth}
  \centerline{\includegraphics[width=\textwidth]{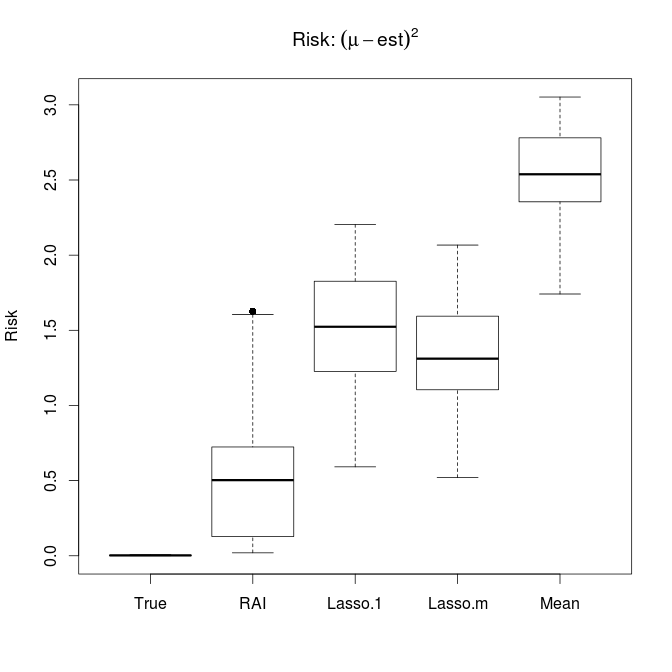}}
    \caption{Risk: $p=10,000$}
 \end{subfigure}
 \begin{subfigure}{.24\textwidth}
  \centerline{\includegraphics[width=\textwidth]{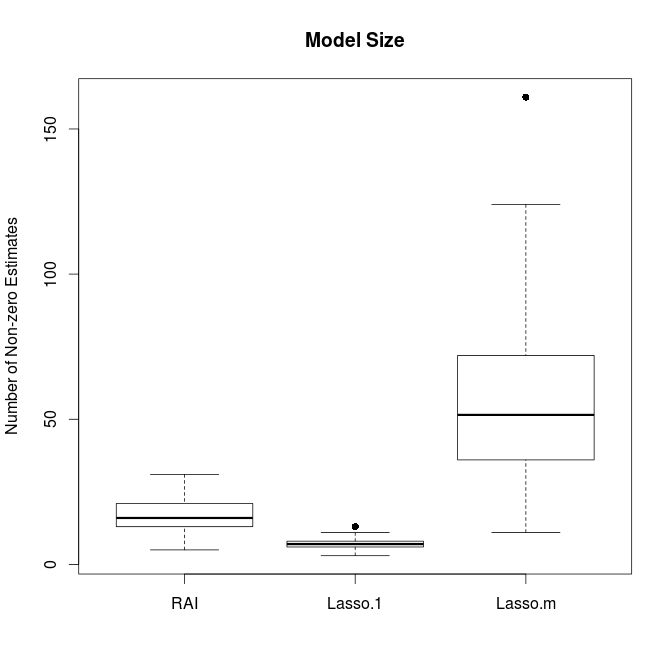}}
    \caption{Size: $p=10,000$}
 \end{subfigure}
 \caption{Risk and model size for small- and large-$p$ simulation settings.}
 \label{fig:perf}
\end{center}
\end{figure}

While our results do not focus on speed, it is worth mentioning that RAI easily
improves speed by a factor of 10-20 over the Lasso. This is notable since the
Lasso is computed using glmnet \citep{glmnet}, a highly optimized Fortran
package, while RAI is coded in R and is geared toward conceptual clarity as 
opposed to speed. Furthermore, RAI also does not have to compute the full 
interaction space.

Next, consider a comparatively large feature space: 2,000 observations with 
10,000 explanatory features. In this case, the maximum observed correlation 
between features is .177. Both RAI and the Lasso are only given the marginal 
variables because the full second-order interaction space has 50 million 
features. Traditional forward stepwise is also very time intensive to run on 
models of this size even without considering the interaction space. Therefore, 
an intelligent search procedure is required to identify signal. Random forests 
were excluded given the excessive time required to fit them off-the-shelf.

Figure \ref{fig:perf} shows the risk and model size resulting from the
algorithms fit to these data. When comparing the risk of the algorithms,
RAI always outperforms both Lasso models, often by 55-90\%. The overlapping
region in the plot merely shows the variability in the difficulty of data
cases. 

RAI only identifies 1.6 true features on average, and rarely identifies all
four. As before, this is in large part due to the number of hypotheses that 
need to be rejected in order to identify an interaction such as 
$\X_7\X_8\X_9\X_{10}$. At
least seven tests must be rejected, starting with the marginal terms, some
second- and third-order interactions, and lastly the true feature. 
That being said, significant progress toward this true feature is
made in all cases. For example, the model includes features such as $\X_7\X_8$
and $\X_8\X_9$ or $\X_7\X_9\X_{10}$.

\subsection{Real Data}
\label{sec:real-data}


One of the most common applications of microarrays is “differential expression” 
profiling: identifying mRNAs/genes whose expression level is different 
under two conditions. More complex questions have been asked recently about the 
interaction between genetics and the environment (patient behavior, etc). In 
order to demonstrate the flexibility of RAI, we address a related problem in the 
selection of higher-order interaction models of multi-factorial transcription 
data \citep{mouseGeneInteract14}.

Our application uses microarray data of bone marrow-derived macrophages 
from different inbred mouse strains \citep{mouseGeneData06} and is accessible 
under GEO accession number GSE 2973. There is an interplay between three factor 
groups: mouse genetic background, bacterial infection with 
\emph{Yersinia enterocolitica} (two different strains and a mock control), and 
an indicator for treatment with interferon-$\gamma$. It is known that C57BL/6 
mice are not susceptible to infection with \emph{Yersinia}, while BALB/c mice 
without IFN-$\gamma$ stimulation are \citep{mouseGeneResult86, 
mouseGeneResult94}. Therefore, there 
is interest in characterizing the interaction between IFN-$\gamma$ and the 
genetic background. 

\citet{mouseGeneInteract14} consider four linear model specifications which 
include various interaction terms that are presented in hierarchical order. 
Their goal is to determine which single model should be used to asses the 
interaction between genetics and environment for all genes, and introduce a 
plotting method to understand the practical significance of these decisions. 
The 
potential models are given in the following 
display, where $\bf{Y}$ is the gene expression matrix, $G$ is the genetic 
background (mouse type), IFN-$\gamma$ stimulant indicator $\Gamma$, and 
bacterial strain $I$.
\begin{IEEEeqnarray}{rCl}
 \bf{Y} & \sim & G + \Gamma + G:\Gamma + I + G:I + \Gamma:I + 
G:I:\Gamma \label{eqn:geneMod}\\
 \bf{Y} & \sim & G + \Gamma + G:\Gamma + I + G:I + \Gamma:I\label{eqn:gene-3}\\
 \bf{Y} & \sim & G + \Gamma + G:\Gamma + I\label{eqn:gene-sel}\\
 \bf{Y} & \sim & G + \Gamma + G:\Gamma\label{eqn:gene-base}
\end{IEEEeqnarray}
Higher order models were considered only for those genes that show a 
significant global effect in model (\ref{eqn:gene-base}) after an FDR 
correction.

Our goal, however, is to select the appropriate higher-order terms for each 
gene 
\emph{individually}. The data set contains 22,690 genes from 37 mice, 
and we will treat each 
regression as a separate interaction-search problem, but all regressions as a 
connected multiple-comparisons problem. All models will include the components 
$G$, $\Gamma$, and $G:\Gamma$, but the other terms in equation 
(\ref{eqn:geneMod}) will be selected for each gene $\bf{Y}_i$ individually. One 
problem, however, is how to share power across the 22,690 connected regression 
problems. Furthermore, we want to incorporate the pre-testing step into a 
valid, multi-step procedure, which the previous authors ignored. Both 
problems are easily solved within the alpha-investing framework.

First, we treat the selection of genes to consider as an instance of the 
special case in Section \ref{sec:orthog} as the global effect tests for each 
base regression model do not change based on which other hypotheses have been 
rejected. As such, we also update the testing thresholds for use with 
RH. Granted, as genes are related, these tests are not truly independent, but 
we treat them as such following \citet{mouseGeneInteract14}. Half of the 
initial alpha-wealth of .05 is spent at this step. Furthermore, instead of 
running RH to completion, we stop when the testing threshold does not increase 
considerably after completing a testing pass in order to retain some 
accumulated wealth for the second step. We note that the results of the 
combined procedure are rather insensitive to this choice of stopping criteria.

On the subset of selected genes, we run RAI on each gene at the 
bonferroni-level 
searching for higher-order interactions on top of the base-model in equation 
(\ref{eqn:gene-base}). After termination, we collect all unspent wealth and 
restart RAI on genes for which \emph{no additional features} were selected. 
When 
restarting, we could either allocate the wealth to regression models again at 
the bonferroni level, (collected $\alpha$)/(remaining regressions), or by a 
different scheme such as considering regressions sequentially and spending 
2*(previous testing level) on each. We opt for the latter in line with the 
geometric increase in testing levels used by RAI. This process is then repeated 
until all wealth is depleted. While this procedure is somewhat wasteful in that 
it ignores the previous tests, the loss in power is negligible because so 
little 
alpha-wealth was spent on these tests. More importantly, the procedure is 
easily 
implementable and valid, in that it controls mFDR over all regression problems.

\citet{mouseGeneInteract14} select and use the model in equation 
(\ref{eqn:gene-sel}) for further analysis. Notably, this does not include any 
higher-order interaction effects among the considered factor groups, opting to 
only add marginal effects for bacterial strain. Using $\alpha=.05$, our 
pretesting step yields 6,746 genes for further analysis, which is close to the 
6,446 selected using the BH procedure. Among this set, we identify expanded 
models for 958 genes, 63 of which include higher-order interactions as in 
equations (\ref{eqn:geneMod}) or (\ref{eqn:gene-3}).

In order to directly compare results, it is easiest to consider the eruption 
plot introduced in \citet{mouseGeneInteract14}, which is 
essentially two volcano plots \citep{volcano12} placed on top of each other. 
A volcano plot displays a measure of unstandardized signal 
strength such as log-fold-change against a standardized signal strength such as 
-log(p-value). In this case, we plot these measures for the coefficient 
estimate on the genetic background-environment interaction $(G:\Gamma)$ for 
each gene that passed the selection step. An eruption plot compares two 
competing models by plotting the two volcano plots together, connecting the 
points for each gene using an arrow. The base of the arrow is located at the 
observation in the base model, whereas the tip is located at the observation in 
the chosen expanded model. 
The resulting plot then includes horizontal and vertical threshold for use in 
double filtering to determine the regions of practical interest (ROI): genes 
for which the effect size is both sufficiently large and significant.

\begin{figure}
 \begin{subfigure}{.45\textwidth}
  \centerline{\includegraphics[width=\textwidth]{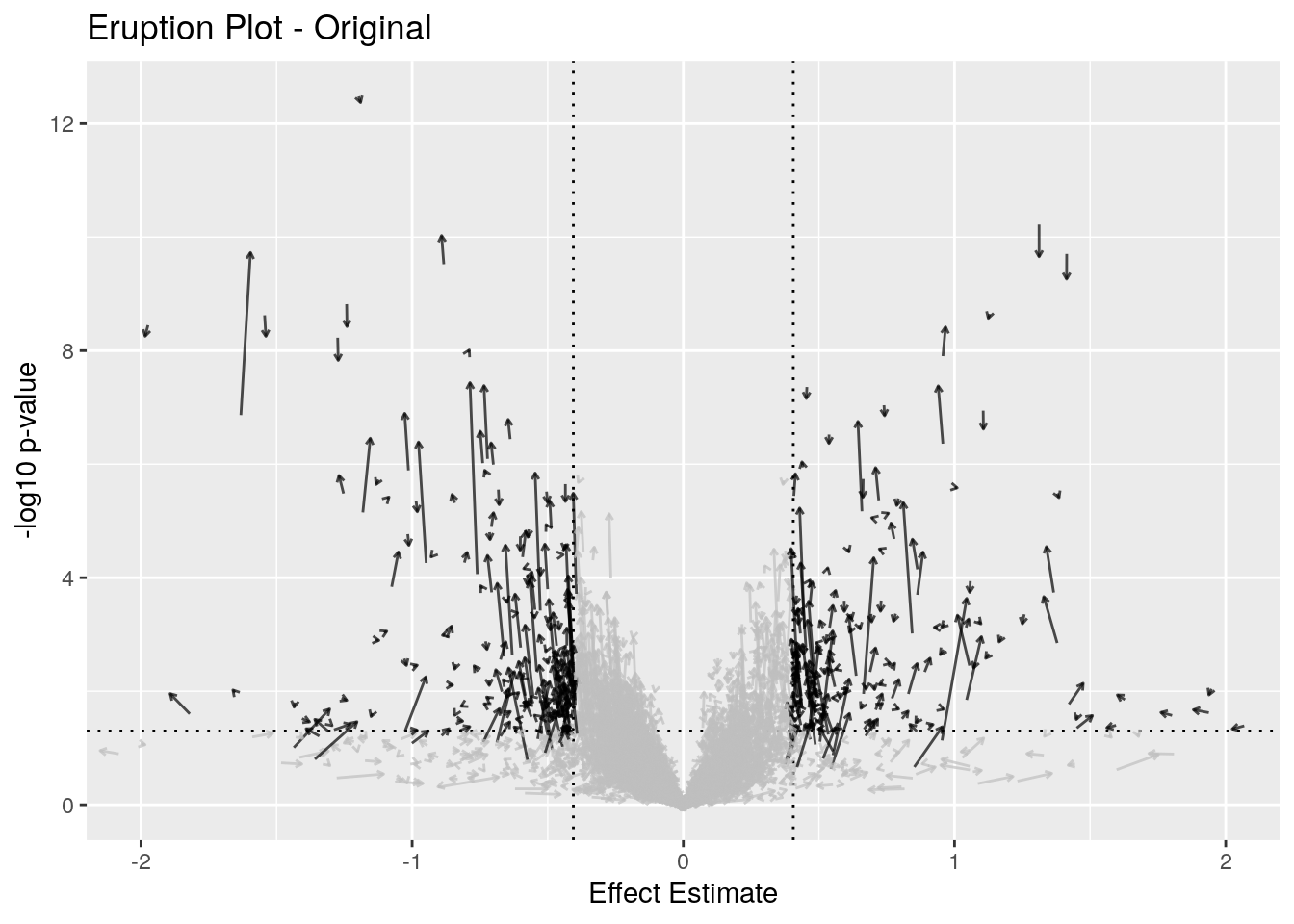}}
 \end{subfigure}
 \begin{subfigure}{.45\textwidth}
  \centerline{\includegraphics[width=\textwidth]{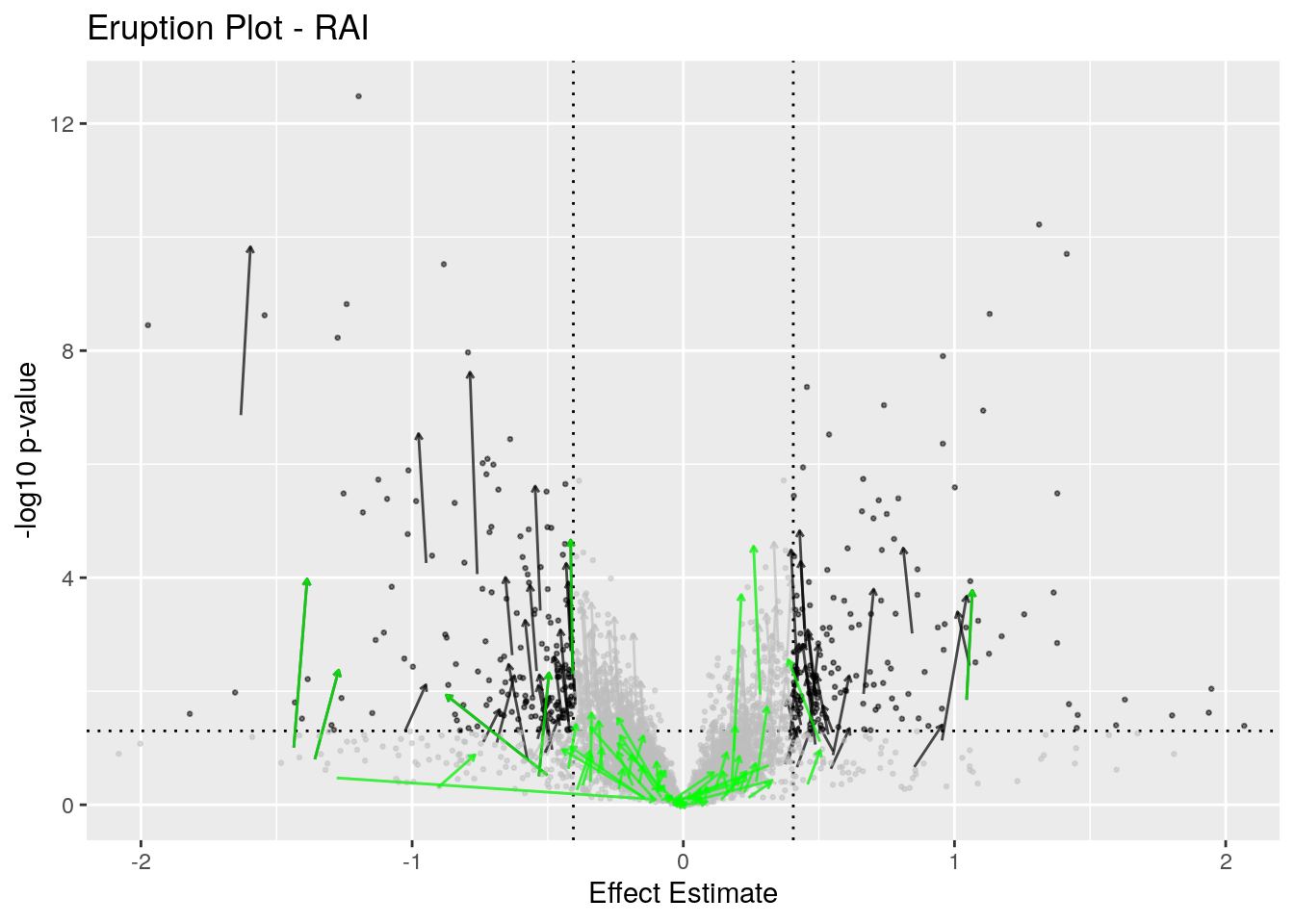}}
 \end{subfigure}
 \caption{Eruption plots for change between base and selected models. Genes 
partially contained in the regions of interest are in black. Additionally, 
arrows in green correspond to genes in which a higher-order interaction was 
identified by RAI.}
 \label{fig:eruption}
\end{figure}

The differences between the two types of analysis are visible in Figure 
\ref{fig:eruption}. Among all genes that passed the selection step, both 
analyses identify approximately 360 genes within the regions of interest, of 
which approximately 20 are different between the two analyses. Of more 
importance, however, is the relative simplicity of interpreting 
confounding or effect modification by the selected model. The refined analysis 
provided by RAI highlights only those genes for which an expanded model was 
statistically significant. We identify more significant 
interaction effects, as indicated by the much longer, vertical, green arrows 
in the ROI. Furthermore, it wrapped the entire pre-test and model 
selection analysis within a single framework to account for selection and 
multiple comparisons.

\mynotesH{Include any discussion about which genes were identified?}



\section{Discussion}
\label{sec:discussion}

This paper presents a novel algorithm, \RAIp, for approximating stepwise 
regression that has multiple types of guarantees. First,  it is proven in 
control mFDR, thereby not over-selecting features. Second, it is guaranteed 
to identify signal and approximate both the true stepwise model as well as the 
best-subset model. We demonstrated how directed search can be 
used to fit high-dimensional interaction models. The alpha-investing framework 
provides flexibility to design more complex analyses such as pre-testing 
followed by model selection, as done in Section \ref{sec:real-data}. As the 
mFDR analysis only required valid testing of a single parameter estimate, much 
of the analysis translates directly to generalized linear models (GLM). 
Future work extending the results of this paper to GLMs would allow much 
broader application as many fields in which interaction-search is important 
rely on these models.



\bibliography{../../../Bib_Stuff/Bibliography}

\begin{appendices}
 \appendix
 \label{appendix}

\section{}
\label{app:valid-tests}

To prove Theorem \ref{thm:valid-tests}, it is easier to control $\Prob(V_i = 
0|R_{[i-1]})$ in which case we have recourse to the Gaussian Correlation 
Inequality (GCI) \citep{Royen14}. A simpler version of the GCI is stated as
\begin{thm}[\citet{LatM15}]
 For any closed symmetric sets $K$, $L$ in $\bbR^d$ and any centered Gaussian 
measure $\pi$ on $\bbR^d$ we have \[\pi(K\cap L) \geq \pi(K)\pi(L).\]
\end{thm}

Using this, the proof is straight-forward if one augments the null hypothesis 
currently being tested with those of previous failed rejections. While this is 
not normally required in hypothesis testing, the sequential nature of the 
procedure requires it as does the GCI. This will be explained in more detail 
below.

\begin{proof}[Proof of Theorem \ref{thm:valid-tests}]
 Separate the set of previous rejections, $R_{[i-1]}$, into tests that 
were rejected ($R_j = 1$) and those which failed to be rejected ($R_j = 0$). 
Denote these two 
sets as $\{R_{[i-1]}\}^1$ and $\{R_{[i-1]}\}^0$, respectively. Note that both 
of 
these are subsequences of random variables which form a partition of 
$R_{[i-1]}$. 
 
 The GCI will allow us to conclude
 \[\Prob(V_i = 0|R_{[i-1]}) =  \Prob(V_i = 0|\{R_{[i-1]}\}^0, \{R_{[i-1]}\}^1) 
\geq \Prob(V_i = 0).\] 
The right-most quantity is controlled by using classical test statistics. The 
proof proceeds by demonstrating that the set  $\{R_{[i-1]}\}^1$ can be safely 
removed under normality while the effect of  $\{R_{[i-1]}\}^0$ can be 
controlled 
by the GCI.
 
 Test $i$ is conducted using a test statistic computed in a space which is 
orthogonal to all features in the current model $M$, ie, orthogonal to all 
features $j$ s.t. 
$R^{M,j}_l = 1$ for some $M$ and $l < i$. Under normality, the current test is 
independent of this set of previous tests. As such, all previous tests of these 
features can be removed in the conditioning statement.  As a feature can be 
tested multiple times, it is possible that $R^{M,j}_k = 0$ but $R^{M',j}_l = 1$ 
for some $k < l < i$. Even though $R^{M,j}_k \in \{R_{[i-1]}\}^0$, $R^{M,j}_k$ 
can be removed from the conditioning statement as feature $j$ is included in 
the 
current model. Therefore the entire set $\{R_{[i-1]}\}^1$ and a subset of 
$\{R_{[i-1]}\}^0$ can be removed. Denote the remaining set of previously failed 
rejections as $\{R_{[i-1]}\}^{0\backslash1}$.
 
  Using the GCI, the remainder of the proof is trivial and uses only elementary 
probability theory. We are assuming that the probabilities of events in 
question 
are positive as they are made under the relevant ``global'' null and are 
non-empty sets that are symmetric around 0.
 \begin{IEEEeqnarray*}{rCl}
  \Prob_{H_i}(V_i = 0|R_{[i-1]}) & =  & \Prob(V_i = 0|\{R_{[i-1]}\}^0, 
    \{R_{[i-1]}\}^1) \\
  & = & \Prob(V_i = 0|\{R_{[i-1]}\}^{0\backslash1})\\
  & = & \frac{\Prob(V_i = 0\cap \{R_{[i-1]}\}^{0\backslash1}} 
    {\Prob(\{R_{[i-1]}\}^{0\backslash1})}\\
  & \geq & \frac{\Prob(V_i = 0)\Prob(\{R_{[i-1]}\}^{0\backslash1}} 
    {\Prob(\{R_{[i-1]}\}^{0\backslash1})}\\
  & = & \Prob(V_i=0)
 \end{IEEEeqnarray*}
 
 In order to verify the assumptions of the GCI, note the following:
 \begin{enumerate}
  \item Using classical tests throughout the entire sequence of testing means 
that all events of the form $\{R_i = 0\}$ or $\{V_i = 0\}$ can be written as 
$\{|\hat t_i| \leq t^*_i\}$, where $\hat t_i$ is the test statistic for the 
$i$th test and $t^*_i$ is its critical value. Therefore, these sets are 
symmetric and closed.
  
  \item While the features being tested in $\{R_{[i-1]}\}^{0\backslash1}$ may 
have been computed in different models, the set of estimated coefficients is 
still Gaussian. The test of $H^{M_i,j}_i$ considers the estimated coefficient 
\[\hat\gamma_i = [(\X_{M_i\cup j}'\X_{M_i\cup j})^{-1}\X_{M_i\cup 
j}'Y]_{|M_i|+1} 
= \delta_j'Y\] 
for some $\delta_i \in\bbR^n$.  Note that we have assumed that 
the 
feature $j$ is appended as the last column of the data matrix $\X_{M_i\cup j}$. 
Let $\Delta = (\delta_1,\delta_2,\ldots,\delta_{i})$ be the matrix whose 
columns 
are the $\delta_i$ vectors for the set of failed tests as well as the current 
test. 
  
  \item In order to have a centered Gaussian, we need to augment the null 
hypotheses of test $i$ to incorporate the null hypotheses of the tests in 
$\{R_{[i-1]}\}^{0\backslash1}$. The original null hypotheses given in equation 
(\ref{eqn:null}) can be written as $H_i$: $\delta_i'\mu(\X) = 0$. The augmented 
null hypothesis is $\Delta'\mu(\X) = \mathbf{0}$, where $\mathbf{0}$ is a 
vector 
of zeros of length $|\{R_{[i-1]}\}^{0\backslash1}|+1$. In order to have a 
non-degenerate normal distribution, we require $n > 
|\{R_{[i-1]}\}^{0\backslash1}|+1$. $\hfill\qedhere$
 \end{enumerate}
\end{proof}

\section{}
\label{app:performance}

Before proving Theorem \ref{thm:performance}, we derive the p-value thresholds 
used to search 
for a given increase in \Rs. As before, we assume that $\Var(Y) = \Var(X_i) = 1$ 
and $\bar Y = \bar X_i = 0$, $\forall i$. For additional simplicity, the 
normalizations involved in $\Var(Y) = 1$ etc ignore degree of freedom 
adjustments. Therefore, we assume $Y'Y = X_i'X_i = n$ instead of $n-1$ etc. 


\subsection{Identifying p-value Thresholds}

RAI and RAI$^+$ search for features that result in an increase of $r^s$ in \Rs 
for the current model. It is well known that the \Rs in a simple regression 
model between $Y$ and $X_i$ is just the squared correlation between $Y$ and 
$X_i$, $r_{Y,i}^2$. Similarly, the increase in \Rs from adding a feature $j$ to 
the model $M$ is the squared partial correlation. We remind the reader of this 
in the lemma below, in slightly greater generality which we will need later.

We write $\mRs(S.M)$ to be the contribution to \Rs of the features in $S.M$. 
Stated differently, $\mRs(S.M)$ is the improvement in \Rs by adding set $S$ to 
set $M$.
\begin{lemma}
 \label{lem:r2-sep}
 Given subsets of features $M$ and $S$,
 \[\mRs(M \cup S) = \mRs(M) + \mRs(S.M)\]
\end{lemma}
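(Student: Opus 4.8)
The plan is to reduce the identity to an orthogonal (Pythagorean) decomposition of the projection of $Y$. First I would exploit the normalization $\|Y\|_2 = 1$ together with the centering of the data, which forces $\bar Y = \mathbf{0}$ and hence $ESS(\bar Y) = \|Y\|_2^2 = 1$. Substituting into the definition of \Rs gives $\mRs(S) = 1 - \|(\I - \Hm_S)Y\|_2^2$, and since $\|Y\|_2^2 = \|\Hm_S Y\|_2^2 + \|(\I-\Hm_S)Y\|_2^2 = 1$ by Pythagoras, this collapses to $\mRs(S) = \|\Hm_S Y\|_2^2$. Thus every \Rs term is nothing more than the squared length of the orthogonal projection of $Y$ onto the relevant column span, and the lemma becomes the claim $\|\Hm_{S\cup T}Y\|_2^2 = \|\Hm_S Y\|_2^2 + \|\Hm_{T/S}Y\|_2^2$, where $\Hm_{T/S}$ denotes the projection onto the span of the adjusted features $\X_{T/S} = (\I - \Hm_S)\X_T$.

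The central geometric step is to show that the column space of $\X_{S\cup T}$ decomposes as the orthogonal direct sum $\mathrm{span}(\X_S) \oplus \mathrm{span}(\X_{T/S})$. Each column of $\X_T$ equals its projection $\Hm_S \X_T \in \mathrm{span}(\X_S)$ plus its residual $(\I - \Hm_S)\X_T = \X_{T/S}$, so $\mathrm{span}(\X_{S\cup T}) = \mathrm{span}(\X_S) + \mathrm{span}(\X_{T/S})$. The two summands are orthogonal, since $\X_{T/S}^T \X_S = \X_T^T(\I - \Hm_S)\X_S = \X_T^T(\X_S - \X_S) = \mathbf{0}$ using the symmetry of $\Hm_S$ and $\Hm_S \X_S = \X_S$. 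Consequently the projector onto the larger space splits as $\Hm_{S\cup T} = \Hm_S + \Hm_{T/S}$ with $\Hm_S \Hm_{T/S} = \Hm_{T/S}\Hm_S = \mathbf{0}$.

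With this splitting in hand the result is immediate: $\|\Hm_{S\cup T}Y\|_2^2 = \|(\Hm_S + \Hm_{T/S})Y\|_2^2 = \|\Hm_S Y\|_2^2 + \|\Hm_{T/S}Y\|_2^2$, the cross term vanishing by orthogonality. Translating back through $\mRs(S) = \|\Hm_S Y\|_2^2$ and identifying $\|\Hm_{T/S}Y\|_2^2$ with the contribution $\mRs(T/S)$ of the adjusted features then yields the stated identity. (The display writes $\mRs(S/T)$; the decomposition forces the second term to be the part of $T$ adjusted for $S$, namely $\mRs(T/S)$, so this reads as a transposition of symbols.)

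The main obstacle is justifying the projector identity $\Hm_{S\cup T} = \Hm_S + \Hm_{T/S}$ cleanly: one must verify that $\Hm_S + \Hm_{T/S}$ is symmetric and idempotent and has range exactly $\mathrm{span}(\X_{S\cup T})$, which is the classical ``extra sum of squares'' / Frisch--Waugh--Lovell fact. Care is also needed if $\X_{S\cup T}$ is rank-deficient, in which case I would read each $\Hm$ as the orthogonal projection onto the corresponding column span (via a generalized inverse) rather than assuming $\X_S^T\X_S$ is invertible; the orthogonality argument is unaffected, since it uses only the defining properties of orthogonal projections and never the explicit inverse formula.
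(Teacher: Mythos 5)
Your proof is correct and is essentially the paper's own argument: both rest on the facts that $\mathrm{span}(\X_{S\cup T})=\mathrm{span}(\X_S)+\mathrm{span}((\I-\Hm_S)\X_T)$ and that $\X_S$ is orthogonal to the adjusted features, which the paper phrases as block-diagonality of the Gram matrix of $[\X_S,\X_{S/T}]$ and you phrase as the projector splitting $\Hm_{S\cup T}=\Hm_S+\Hm_{T/S}$. Your added care about rank deficiency and your reading of the paper's $S/T$ as a transposition of $T/S$ are both sensible refinements, not departures from the method.
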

\begin{proof}
 Let $\X_{M,S.M} = [\X_M,\X_{S.M}]$ and $M = [M,S.M]$.
 \begin{IEEEeqnarray*}{rCl}
  \mRs(M \cup S) & = & \mRs(M \cup S.M)\\
  & = & n^{-1}Y'P_MY + n^{-1}Y'P_{S.M}Y\\
  & = &\mRs(M) + \mRs(S.M)
 \end{IEEEeqnarray*}
 The first line follows because the prediction space did not change and the
 second line follows because $\X_{M,S.M}^T\X_{M,S.M}$ is block
 diagonal. Note that this could also be taken as a definition of $\mRs(S.M)$.
\end{proof}

In the above lemma, replacing $S$ with a single column $j \notin M$, shows the 
marginal improvement in \Rs to be the squared partial correlation between $Y$ 
and $\X_{j.M}$, denoted $r_{y,j.M}$. Therefore, RAI computes and compares 
squared partial correlations to the current threshold $r^s$. A feature $j$ is 
added to the current model $M$ when $r_{y,j.M}^2 > r^s$. As the t-statistic for 
testing $H_i^{M,j}$ can be written as a function of this partial correlation, 
the comparison RAI is making can be written in terms of p-values or test 
statistics.

\subsection{Bounding the Performance of the Selected Set}

We begin by generalizing the standard greedy proof of \citep{NemWF78} to 
approximately submodular functions. The proof is also similar to that of 
\citet{DasK11}, but allows $l > k$. Afterward we will discuss what needs to be 
changed for RAI$^+$

\begin{prop} If $M_l$ is selected by stepwise regression, then
 \label{prop:step}
 \begin{IEEEeqnarray*}{rCl}
  \mRs(M_l) & \geq & (1 - e^{-l\gamma/k})\mRs(M_k^*)
 \end{IEEEeqnarray*}
\end{prop}

Proving this proposition requires a bound on the difference between the \Rs of 
adding a set of features and the sum of the changes in \Rs by adding the 
features one at a time. Such a bound is provided by the submodularity ratio 
$\gamma$.

\begin{lemma}
 \label{lem:Rsbnd}
 For simplicity let $S\cap M = \emptyset$, or define $\tilde S = S\backslash M$.
 Then,
 \begin{IEEEeqnarray*}{rCl}
  \mRs(S.M) & \leq & \frac{\sum_{x\in S\backslash M} \mRs(M\cup
   \{x\})- \mRs(M)}{\gamma(M,|S|)}
 \end{IEEEeqnarray*}
\end{lemma}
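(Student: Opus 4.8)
The plan is to unwind the definitions until both sides of the claimed inequality become quadratic forms in the partial-correlation vector $r_{Y,T.S^\perp}$, at which point the bound is immediate, because the submodularity ratio is \emph{defined} as a minimum over exactly such quadratic-form ratios. There is no genuinely analytic step here; the whole argument is definition-chasing.

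First I would reduce to the case $T\cap S=\emptyset$, which the statement already anticipates via $\tilde T = T\backslash S$. Any $x\in T\cap S$ has $\X_{x.S^\perp}=(\I-\Hm_{\X_S})\X_x=0$, so it contributes a zero squared partial correlation to the right-hand numerator and leaves the span of $S\cup T$ unchanged; hence replacing $T$ by $\tilde T = T\backslash S$ alters neither side, and I may assume $T\cap S=\emptyset$ throughout.

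Next I would rewrite the left-hand side. By Lemma \ref{lem:r2-sep}, $\mRs(T/S)=\mRs(S\cup T)-\mRs(S)=\Delta_S(T)$, and the matrix identity recorded in the introduction gives $\mRs(T/S)=r_{Y,T.S^\perp}'\,C_{T.S^\perp}^{-1}\,r_{Y,T.S^\perp}$. Then I would rewrite the right-hand numerator term by term: under the normalization $\|Y\|_2=\|X_i\|_2=1$, each single-feature increment is a squared partial correlation, $\mRs(S\cup\{x\})-\mRs(S)=\Delta_S(x)=\text{Cor}(Y,x.S^\perp)^2$, so summing over $x\in T\backslash S$ yields $\sum_{x\in T\backslash S}\big(\mRs(S\cup\{x\})-\mRs(S)\big)=\|r_{Y,T.S^\perp}\|_2^2=r_{Y,T.S^\perp}'\,r_{Y,T.S^\perp}$.

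Finally I would invoke the submodularity ratio. Since this particular $T$ satisfies $T\cap S=\emptyset$ and $|T|\le |T|$, it is one of the sets in the minimization defining $\gamma(S,|T|)=\gamma_{sr}(S,|T|)$, so $\gamma(S,|T|)\le r_{Y,T.S^\perp}'r_{Y,T.S^\perp}\big/\,r_{Y,T.S^\perp}'C_{T.S^\perp}^{-1}r_{Y,T.S^\perp}$; rearranging and substituting the two identifications above gives the claim. The only place that demands care is the bookkeeping: confirming that the left side is precisely the quadratic form $r'C^{-1}r$ while the numerator is the plain norm $r'r$, so that their ratio is exactly the quantity $\gamma_{sr}$ lower-bounds, and checking that the $T\cap S=\emptyset$ reduction is genuinely lossless on both sides.
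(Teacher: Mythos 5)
Your proposal is correct and follows essentially the same route as the paper's proof: write $\mRs(T/S)$ as the quadratic form $r_{Y,T.S^\perp}'C_{T.S^\perp}^{-1}r_{Y,T.S^\perp}$, bound it by $r_{Y,T.S^\perp}'r_{Y,T.S^\perp}/\gamma(S,|T|)$ directly from the definition of the submodularity ratio, and identify the numerator as the sum of single-feature increments in \Rs. Your version simply spells out the $T\cap S=\emptyset$ reduction and the link to Lemma \ref{lem:r2-sep} that the paper leaves implicit.
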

\begin{proof}
 \begin{IEEEeqnarray*}{rCl}
  \mRs(S.M) & = & (r^M_S)'(C^M_S)^{-1}(r^M_S)\\
  & \leq & \frac{(r^M_S)'(r^M_S)}{\gamma(M,|S|)}\\
  & = & \frac{\sum_{x\in S.M} \mRs(\{x\})}{\gamma(M,|S|)},
 \end{IEEEeqnarray*}
 where the inequality follows by the definition of $\gamma(M,|S|)$. Since each
 element in $r^M_S$ is a correlation, squaring this gives the \Rs from the
 simple regression of $Y$ on $\x.M$, giving the final equality. The lemma
 just rewrites the result of the projection off of $M$ as a difference in
 observed \Rs.
\end{proof}

\begin{proof}[Proof of Proposition \ref{prop:step}]
 \begin{IEEEeqnarray*}{rCl+s}
  \mRs(M_k^*) & \leq & \mRs(M_i \cup M_k^*) & by monotonicity\\
  & = & \mRs(M_i) + \mRs(M_k^*/M_i) & Lemma \ref{lem:r2-sep}\\
  & \leq & \mRs(M_i) + \frac{\sum_{x\in M_k^*\backslash M_i}
   \mRs(M_i \cup \{x\}) - \mRs(M_i)}{\gamma_{M_i,|M_k^*\backslash M_i|}} & Lemma
  \ref{lem:Rsbnd}\\
  & \leq & \mRs(M_i) + \frac{k}{\gamma_{M_i,|M_k^*\backslash
    M_i|}}\max_{x\in M_k^* \backslash M_i} \mRs(\{x\})& sum less than k*max\\
  & \leq & \mRs(M_i) + \frac{k}{\gamma_{M_i,|M_k^*\backslash M_i|}}
  (\mRs(M_{i+1}) - \mRs(M_i)) & by greedy algorithm
 \end{IEEEeqnarray*}
 
 Increasing the size of the set $S$ by inclusion and increasing k can only 
decrease $\gamma(S,k)$. Therefore, $\gamma(M_i,|M_k^*\backslash M_i|) \geq 
\gamma(M_l,|M_k^*|) \geq \gamma$. Rearranging the final line above, 
dividing by $k/\gamma$ and adding $(1- \gamma/k)\mRs(M_k^*)$ to both sides 
yields
 \begin{IEEEeqnarray*}{rCl+s}
  \mRs(M_k^*) - \mRs(M_{i+1}) & \leq & (1 - \gamma/k)(\mRs(M_k^*) - \mRs(M_i))
  \IEEEyesnumber\label{eqn:ind.step}\\
  & \leq & (1 - \gamma/k)^{i+1}(\mRs(M_k^*) - \mRs(M_0)) &repeatedly apply \ref{eqn:ind.step}\\
  & = & (1 - \gamma/k)^{i+1}\mRs(M_k^*) & $\mRs(M_0) = 0$\\
  & \leq & e^{-(i+1)\gamma/k}\mRs(M_k^*) & Taylor approximation\\
  \Rightarrow \mRs(M_l) & \geq & (1 - e^{-l\gamma/k})\mRs(M_k^*) & set $l = i+1$ \qedhere
 \end{IEEEeqnarray*}
\end{proof}

The above proof cannot be used to control RAI because we are not guaranteed to 
include the feature which yields the maximum increase in \Rs. Furthermore, RAI 
is not even guaranteed to observe $\mRs(M_i \cup \{x\}) - \mRs(M_i)$, because 
the previous tests of $x\in M_k^*\backslash M_i$ may have occurred in a 
different model $M_{i-\iota}$, for $1 \leq \iota \leq i$. Overcoming these 
difficulties requires the modifications in RAI$^+$.

For RAI$^+$, we must be precise about the distinction between the size of 
the current model and the index of the testing pass. As before, the current 
model of size $i$ is denoted $M_i$. Let the current testing pass be $s$ and  
$M^{s-1}$ be the model used at the end of the previous testing pass $s-1$, where 
the pass index is given as a superscript to differentiate it from the subscript 
of step number. Note that $M^{s-1} = M_{i-\iota}$ for some $\iota \leq i$. 
Therefore $\iota$ is the number of features added in this pass before the 
current test. Recall that, by the definition of \RAIp, all features are tested 
in $M^{s-1}$ and found to yield improvements in \Rs less than $r^{s-1}$.

We desire a similar bound as in Lemma \ref{lem:Rsbnd} which applies to models 
selected by RAI$^+$. Such a bound is given in the following lemma.
\begin{lemma}
 \label{lem:rai.bound}
 If $M_{i+1}$ and $M_i$ are models chosen by RAI$^+$, then the marginal 
improvement of adding the set of features in $M_k^*$ to the model $M_i$ can be 
bounded as
 \begin{IEEEeqnarray*}{rCl}
  \mRs(M_k^*.M_i) & \leq & c(\mRs(M_{i+1} - \mRs(M_i)),
 \end{IEEEeqnarray*}
 where $c = \left( \frac{\iota+k}{\gamma r} - \iota \right)$ and $\iota = |M_i \backslash M^{s-1}|.$
\end{lemma}

\begin{proof}
 The bound is the result of rearranging the submodularity ratio while 
accounting for the fact that marginal improvements in \Rs can only be bounded in 
model $M^{s-1}$. In what follows, all statistics are computed as additions to 
model $M^{s-1}$. For example, $r_{y,j}^2$ is really $r_{y,j.M^{s-1}}^2$. Without 
loss of generality, let the new features added in the current pass be $M_i 
\backslash M^{s-1} = \{1,\ldots,\iota\}$.  Similarly, let $M_k^*\backslash M_i = 
\{\iota+1,\ldots,\iota+|M_k^*\backslash M_i|\}$. By definition, the 
submodularity ratio yields
 \begin{IEEEeqnarray*}{rCl+s}
  \gamma & \leq & \frac{r_{y,1}^2 + \ldots + r_{y,\iota+|M_k^*\backslash M_i|}^2}
  {\mRs(M_i \backslash M^{s-1} \cup M_k^*)}\\
  & = & \frac{r_{y,1}^2 + \ldots + r_{y,\iota+|M_k^*\backslash M_i|}^2}
  {r_{y,1}^2 + r_{y,2.1}^2 + \ldots + r_{y,\iota.\{1,\ldots,\iota-1\}}^2 + 
\mRs(M_k^*.M_i)}.
  & Lemma \ref{lem:r2-sep}
 \end{IEEEeqnarray*}
 
 All of the objects in the numerator were below the bound $r^{s-1}$ because 
they were tested before the end of the previous pass. Similarly, all of the 
squared partial correlations in the denominator were greater than the threshold 
$r^s$, because they were added to the RAI$^+$ model during the current pass. 
Using these bounds yields
 \begin{IEEEeqnarray*}{rCl}
  \mRs(M_k^*.M_i) & \leq & \frac{(\iota+|M_k^*\backslash M_i|)r^{s-1}}{\gamma} - \iota r^s\\
  & = & \left( \frac{\iota+|M_k^*\backslash M_i|}{\gamma r} - \iota \right) r^s\\
  & \leq & \left( \frac{\iota+|M_k^*\backslash M_i|}{\gamma r} - \iota \right) 
  (\mRs(M_{i+1}) - \mRs(M_i)) \IEEEyesnumber\label{eqn:lem3.bnd}.
 \end{IEEEeqnarray*}
 Where the last line follows when the next feature is added during pass $s$. If 
not, then we know $\iota = 0$, the elements in the numerator are bounded by 
$r^s$ not $r^{s-1}$, and we can pull out $r^{s+1}$ instead of $r^s$. This yields 
the same bound as in equation \ref{eqn:lem3.bnd}. In the worst case, no features 
in $M_k^*$ are in the current model. Setting $|M_k^*\backslash M_i| = k$ yields 
the statement in the lemma.
\end{proof}

Using this lemma, the main theorem can be proven.
\begin{proof}[Proof of Theorem \ref{thm:performance}]
Replace $k/\gamma$ in 
\begin{IEEEeqnarray*}{rCl}
\mRs(M_k^*) & \leq & \mRs(M_i) + 
\frac{k}{\gamma_{M_i,|M_k^*\backslash M_i|}} (\mRs(M_{i+1}) - \mRs(M_i)) 
\end{IEEEeqnarray*}
by $c$ given in Lemma 
\ref{lem:rai.bound}. The proof proceeds as before.
\end{proof}
\end{appendices}

\end{document}